\documentclass[11pt]{article}
\usepackage[a4paper,margin=1in]{geometry}
\usepackage{mathtools,amssymb,amsthm}
\usepackage{graphicx}
\usepackage[numbers,sort&compress]{natbib}
\usepackage{hyperref}
\usepackage{mathtools}
\usepackage{algorithm}
\usepackage{algorithmic}
\usepackage{authblk}
\usepackage{tikz, multirow,xcolor}
\usetikzlibrary{arrows.meta,calc,positioning,decorations.pathreplacing,shapes.geometric}
\usepackage[numbers,sort&compress]{natbib}
\definecolor{hsnwBlue}{RGB}{42,97,160}
\definecolor{hsnwRed}{RGB}{165,64,58}
\definecolor{hsnwGray}{RGB}{105,112,120}
\providecommand{\Description}[2][]{}
\newcommand{\T}{\mathbb T}
\newcommand{\PP}{\mathbb P}
\newcommand{\E}{\mathbb E}
\newcommand{\R}{\mathbb R}
\newcommand{\cC}{\mathcal C}

\newcommand{\cG}{\mathcal G}
\newcommand{\cH}{\mathcal H}
\newcommand{\cF}{\mathcal F}
\newcommand{\cK}{\mathcal K}
\newcommand{\cN}{\mathcal N}
\newcommand{\1}{\mathbf 1}
\newcommand{\eps}{\varepsilon}
\newtheorem{maintheorem}{Theorem}
\newtheorem{theorem}{Theorem}[section]
\newtheorem{lemma}[theorem]{Lemma}
\newtheorem{proposition}[theorem]{Proposition}

\theoremstyle{definition}
\newtheorem{definition}[theorem]{Definition}

\title{The Geometry of Hierarchical Navigation: Accuracy and Query Cost for Point Process Input}

\author[1]{Shankar Bhamidi}
\author[2]{Souvik Dhara}
\author[3]{Lars Schroeder}
\author[3]{Clara Stegehuis}
\affil[1]{Department of Statistics and Operations Research, University of North Carolina at Chapel Hill, Chapel Hill, NC, USA}
\affil[2]{H. Milton Stewart School of Industrial and Systems Engineering, Georgia Institute of Technology, Atlanta, GA, USA}
\affil[3]{Mathematics of Operations Research, Faculty of Electrical Engineering, Mathematics and Computer Science, University of Twente, Enschede, The Netherlands}
\date{}

\begin{document}
\maketitle
\begin{abstract}
Large-scale information retrieval systems, including retrieval-augmented generation (RAG) and recommendation engines, widely use multi-layered hierarchical data structures for ultra-fast approximate nearest-neighbor search in high-dimensional vector spaces. However, the geometric conditions that ensure accurate and efficient greedy navigation remain poorly understood. In this work, we study the efficiency of greedy navigation on a hierarchy of proximity graphs constructed from \(n\) data points on the \(d\)-dimensional torus~$\mathbb{T}^d$. We identify a deterministic coverage condition under which, given any query $q\in \mathbb{T}^d$, greedy search returns a point within $(1+\varepsilon)$-factor of the distance to the closest point.
This coverage property holds with high probability when the data is distributed as a homogeneous Poisson process, a Hermitian determinantal process, or a bounded-density Cox process, as long as $d = o(\log n/\log \log n)$. Under the same assumptions, the expected number of greedy hops for a fixed query is \(O\!\left(\exp\!\left(\tfrac12 d\log d+O(d)\right)\log n\right)\), yielding logarithmic expected hop count in fixed dimension.
\end{abstract}

\section{Introduction}
\label{sec:introduction}
Ultra-fast approximate nearest-neighbor (ANN) search is a core computational component of retrieval-augmented generation (RAG)-based large language model (LLM) systems~\cite{Lewis2020}, large-scale recommendation engines~\cite{Covington2016}, image matching~\cite{Lowe2004}, and place recognition in robotics~\cite{An2019}. In a typical RAG pipeline, documents are split into passages and embedded as vectors. The query is similarly embedded, and ANN search retrieves passages with nearby embedding vectors. The retrieved passages are supplied alongside the query as context to an LLM, which uses this information to generate an answer~\cite{Lewis2020}. Similarly, semantic search retrieves text by similarity between learned representations~\cite{ReimersGurevych2019}, recommendation systems retrieve candidate items using user and item embeddings~\cite{ChenEtAl2022}, and image retrieval finds images with nearby feature vectors~\cite{JohnsonEtAl2017}. For large collections and high query rates, comparing each query with every stored vector is computationally expensive~\cite{Covington2016,JohnsonEtAl2017}, motivating nearest-neighbor search methods that trade exactness for faster retrieval.

A useful approach is to organize the data into a hierarchy of proximity graphs, as in Hierarchical Navigable Small World (HNSW) indexes~\cite{MalkovYashunin2020}.
Classical ANN algorithms obtain efficient $(1+\varepsilon)$-approximation guarantees through carefully designed data structures~\cite{IndykMotwani1998,arya1993approximate}.
HNSW instead organizes the data into nested layers generated by random sampling~\cite{MalkovYashunin2020}, and navigates proximity graphs constructed using nearest-neighbor heuristics.
Search starts in a sparse upper layer and proceeds toward the query through successively denser layers, combining coarse navigation with local refinement.
Compared with earlier greedy navigation methods, this combination separates navigation across distance scales and substantially improves empirical search performance, particularly on clustered data~\cite{MalkovYashunin2020}.
See Figure~\ref{fig:hierarchy} for an illustration.
Libraries such as Faiss~\cite{FaissIndexes} and hnswlib~\cite{Hnswlib} support this approach through implementations of hierarchical graph-based ANN search.
HNSW-based retrieval is used in documented RAG implementations built around Claude~3 Sonnet~\cite{AWSClaudeHNSW2024}, Llama~3~70B~\cite{AWSLlamaHNSW2024}, and GPT-4o~\cite{MicrosoftGPT4oHNSW2024}.
\emph{The geometric properties that make navigation on such hierarchical proximity graphs  accurate and efficient
are the focus of this work.}

Following the introduction of HNSW, substantial progress has been made in identifying graph properties that support accurate search~\cite{IndykXu2023,GollapudiEtAl2025,PengEtAl2023,LiYanLu2026}.
The search strategy and the hierarchy introduce additional challenges.
Even when a graph permits greedy routing between stored points, standard fixed-width beam search, which retains several promising candidates rather than following a single path, can fail to provide any finite approximation guarantee for arbitrary queries~\cite{AlJazzaziEtAl2025}. Experiments in~\cite{IndykXu2023} also show that, on carefully constructed datasets, practical HNSW implementations may need to examine a constant fraction of the data points to achieve high nearest-neighbor retrieval accuracy, even in two-dimensional Euclidean space.
These findings motivate understanding when simple greedy navigation succeeds on hierarchical proximity graphs under suitable assumptions on the data.

For hierarchical search based on $\varepsilon$-nets, HENN~\cite{DehghankarAsudeh2025} obtains search-cost guarantees when each upper layer intersects every sufficiently populated spherical shell of the layer below.
A complementary line of work analyzes navigation under probabilistic input models. Analyses of graph-based nearest-neighbor search consider independent uniform points on a sphere, including query models with a planted close neighbor~\cite{Laarhoven2018,ProkhorenkovaShekhovtsov2020}.
Navigation on point processes also has a substantial stochastic-geometric literature, establishing laws of large numbers, path length and hop count asymptotics, and deviation bounds for radial, directed, and cone-based routing~\cite{BaccelliBordenave2007,Bordenave2008,BonichonMarckert2011,GhoshJahnelJhawar2026}.
These works provide probabilistic context and motivation for our analysis of greedy navigation on an HNSW-inspired hierarchy.
\paragraph*{Our contribution.} We identify a deterministic directional coverage condition that ensures accurate greedy navigation on an HNSW-inspired hierarchy of proximity graphs. Under this condition, greedy search returns a
$(1+\varepsilon)$-approximate nearest neighbor simultaneously for every query, every nonempty layer, and every starting vertex.
We then show that this condition holds with high probability for homogeneous Poisson processes, Hermitian determinantal processes, and bounded-density Cox processes, encompassing independent, repulsive, and clustered point configurations.
In the dimension regime $d=o(\log n/\log\log n)$, we also show that, for a fixed query, the expected number of greedy hops is
\(
O\!\left(\exp\!\left(\tfrac12 d\log d+O(d)\right)\log n\right).
\)
Thus, a common deterministic coverage condition supports uniform accuracy guarantees and, together with probabilistic bounds on local point counts, yields logarithmic expected hop count in fixed dimension across these input models.
Experimental evaluation on these point process models  complements the theoretical analysis, illustrating how approximation accuracy varies with dimension and sector count. We also examine directional coverage and compare the routes produced by greedy navigation with shortest paths in the constructed hierarchy.

\section{Model and algorithm}
\label{sec:framework}

\subsection{Points with hierarchy}
We first describe the setup of the layered hierarchy.
Let $\T^d=\mathbb R^d/\mathbb Z^d$, with normalized Lebesgue measure, where $d=d_n\ge1$ may depend on $n$. For $x,y\in\T^d$, let $\delta(x,y)\in[-1/2,1/2)^d$ be the canonical lift of $x-y$ and write $d(x,y)=\|\delta(x,y)\|_2$. The torus diameter is $\Delta_d=\sqrt d/2$ and $c_d$ is the volume of the Euclidean unit ball. Because of that, on the torus, balls of radius~$r<1/2$ have volume $c_dr^d$. For $v \in \T^d$, we denote balls of radius $r$ by $B(v,r) = \{ x\in \T^d: d(x,v) \le r\}$. For nonzero vectors $x,y\in \R^d$, define the angle between them as $\angle(x,y):= \arccos(\langle x,y \rangle/(\|x\|_2\|y\|_2))$.

We begin with a finite set of distinct points $X_0\subset\T^d$ and a size parameter $n>1$. For a deterministic dataset, we take $n=|X_0|$ and for the point-process models below, $n$ specifies the intensity scale.
Fix $\rho\in(0,1)$ and assign each point $x\in X_0$ an independent geometric height $H_x\in\{0,1,2,\ldots\}$ with $\PP(H_x\ge\ell)=\rho^\ell$.
This is the independent layer-sampling mechanism used in HNSW~\cite{MalkovYashunin2020}. Here we use the deterministic cutoff $L$ on heights. This choice keeps the top-layer intensity scale of order $D\log n$. Each point is present in every layer up to its height, and we construct the hierarchy only up to layer $L$.
We call $X_0$ together with the height marks $(H_x)_{x\in X_0}$ the marked input.
To compare different layers, we introduce a reference scale for their local density. If layer \(\ell\) contains about \(n_\ell\) points, then the characteristic distance between nearby points is of order \(n_\ell^{-1/d}\). We define
\begin{equation}\label{eq:scale} 
n_\ell=n\rho^\ell, \qquad h_\ell= \Big( \frac{\log n}{n_\ell} \Big)^{1/d}.
\end{equation} The extra factor \(\log n\) is chosen so that balls of radius \(h_\ell\) contain logarithmically many points for fixed $d$, providing enough points for the large-deviation estimates used later. The condition on the deterministic hierarchy later will require that the distances to the closest points on layer $\ell$ are at most at scale $h_\ell$.
For $n>1$, define
\begin{equation}
L=\max\left\{0,\left\lfloor\log_{1/\rho}\frac{n}{D\log n}\right\rfloor\right\}, \qquad X_\ell=\{x\in X_0:H_x\ge\ell\},\quad 0\le\ell\le L.
 \label{eq:hierarchy}
\end{equation}
Here, $D>0$ controls the intensity scale of the top layer and may depend on $d$ and the fixed model parameters. Its choice is specified in Appendix~\ref{app:dimension}. For sufficiently large $n$, this yields $$D\log n\le n_L<(D/\rho)\log n,$$ so that the top layer contains logarithmically many points.

\begin{figure}[t]
\centering
\begin{tikzpicture}[>=Latex,x=.92cm,y=.92cm,font=\footnotesize]

\foreach \ell/\lab in {0/0,1/1,2/L}{
  \begin{scope}[shift={(0,1.2*\ell)}]
    \filldraw[fill=hsnwBlue!15, draw=none]
      (0,0)--(5.2,0)--(6.25,.85)--(1.05,.85)--cycle;
    \node[left] at (0,.18) {$X_{\lab}$};
    \foreach \id/\u/\v/\h in {
      A/.12/.25/2,B/.40/.50/2,C/.56/.72/1,W/.66/.65/0,
      S/.91/.45/0,D/.87/.18/0,E/.82/.90/0,
      F/.20/.75/1,G/.35/.18/0,H/.57/.15/1,I/.63/.82/0,
      J/.08/.60/0,K/.38/.82/2,M/.72/.22/0,N/.96/.83/0}{
      \ifnum\h<\ell\relax\else
        \coordinate (\id\ell) at ({5.2*\u+1.05*\v},{.85*\v});
        \fill[hsnwBlue!60] (\id\ell) circle (1.65pt);
      \fi
    }
    \coordinate (q\ell) at ({5.2*.91+1.05*.65},{.85*.65});
  \end{scope}
}

\draw[gray!40,densely dotted] (A0)--(A2);
\draw[gray!40,densely dotted] (K0)--(K2);
\draw[gray!40,densely dotted] (B0)--(B1);
\draw[hsnwRed,thick,->] (A2)--(B2);
\draw[hsnwRed,thick,dashed,->] (B2)--(B1);
\draw[hsnwRed,thick,->] (B1)--(C1);
\draw[hsnwRed,thick,dashed,->] (C1)--(C0);
\draw[hsnwRed,thick,->] (C0)--(W0);
\draw[hsnwRed,thick,fill=hsnwBlue!80] (W0) circle (2.3pt);
\fill[green!45!black] (S0) circle (1.8pt);
\foreach \ell in {0,1,2}{
 \node[star,star points=5,fill=hsnwBlue,inner sep=1.3pt,label={[hsnwBlue]above right:$q$}] at (q\ell) {};
}
\node[blue,right] at (W0) {\textcolor{hsnwRed}{$W_0$}};
\node[blue,below right] at (S0) {\textcolor{green!45!black}{$x_*$}};
\node[above left] at (A2) {entry};
\node[above] at (3.1,3.4) {Navigation on nested point sets};
\node[align=center] at (3.1,-.45) {\scriptsize Solid red: greedy hops within a layer.\\
\scriptsize Dashed red: descending at the same point.};

\node[above] at (10.25,3.4) {Layer 0 close-up};
\coordinate (Q) at (10.65,1.5);
\coordinate (W) at (9.15,1.5);
\coordinate (S) at (10.65,.3);
\coordinate (U) at (9.51,.3);
\coordinate (V) at (8.52,2.45);
\draw[hsnwRed!60,dashed] (Q) circle (1.5);
\draw[hsnwBlue,densely dotted] (Q) circle (1.2);
\draw[gray!65,->] (W)--(U);
\draw[gray!65,->] (W)--(V);
\fill[gray!75] (U) circle (1.6pt);
\fill[gray!75] (V) circle (1.6pt);
\draw[hsnwRed,thick] (W)--node[above] {$d(W_0,q)$}(Q);
\draw[hsnwBlue,thick] (Q)--node[right] {$d(x_*,q)$}(S);
\draw[hsnwRed,thick,fill=white] (W) circle (2.5pt);
\node[left] at (W) {\textcolor{hsnwRed}{$W_0$}};
\node[star,star points=5,fill=hsnwBlue,inner sep=2pt,label=above right:\textcolor{hsnwBlue}{$q$}] at (Q) {};
\fill[green!45!black] (S) circle (2.2pt);
\node[below] at (S) {\textcolor{green!45!black}{$x_*$}};
\node[align=center] at (10.25,-.5) {\scriptsize $d(W_0,q)\leq (1+\varepsilon) d(x_*,q).$\\
\scriptsize Gray outgoing arcs do not improve query distance.};
\end{tikzpicture}
\caption{Hierarchical navigation on two-dimensional point sets.
A star marks the same query $q$ on each layer. The search descends through copies of the same data point and ends at the marked data point $W_0$ on layer $0$.
The close-up shows that this approximate answer $W_0$ may not be the true nearest
neighbor $x_*$. Our theorem shows that $d(W_0,q)\leq (1+\varepsilon) d(x_*,q)$.}
\Description{Three stacked parallelograms contain nested two-dimensional point clouds. Red arrows show greedy moves and dashed vertical arrows show descent. A planar close-up shows the returned point W zero slightly farther from the query than the true nearest point x star, with a distance ratio of 1.25. The outgoing neighbors shown at W zero are farther from the query, so greedy search stops there.}
\label{fig:hierarchy}
\end{figure}

\subsection{Directional diversification and candidate pools}
\label{sec:directional}
We now show how we construct the graph to ensure that it supports effective greedy search in all potential directions (Algorithm~\ref{alg:build}). To achieve this, we partition the space around a vertex into sectors and aim to connect to one
representative per occupied sector. Directional neighbor selection also underlies classical cone-based graph constructions and search methods on a single graph, without the sampled hierarchy used here~\cite{Yao1982,Clarkson1987,Keil1988,arya1993approximate}.

Let $\mathbb S^{d-1}=\{u\in\mathbb R^d:\|u\|_2=1\}$ denote the unit sphere. Partition $\mathbb S^{d-1}$ into Borel measurable cells $\mathcal U_1,\ldots,\mathcal U_K$ whose boundaries have zero $(d-1)$-dimensional surface measure. Let $\sigma$ denote this surface measure. Write
\begin{equation}
 p_j=\frac{\sigma(\mathcal U_j)}{\sigma(\mathbb S^{d-1})},\qquad
 p_* =\min_jp_j>0,\qquad
 \omega=\max_j\operatorname{diam}_{\angle}(\mathcal U_j)<\pi/3.
 \label{eq:partition}
\end{equation}
Here, $\operatorname{diam}_{\angle}(\mathcal U_j)=\sup_{u,w\in\mathcal U_j}\arccos\langle u,w\rangle$ is the angular diameter of the cell.
For $d=1$, $\sigma$ is the counting measure on $\mathbb S^0=\{-1,1\}$ and the two
singleton cells have diameter zero. We then define the $j$-th sector at $v$ as
\[
 \cC_j(v)=\left\{x\ne v:
 \delta(x,v)/\|\delta(x,v)\|\in\mathcal U_j\right\}.
\] For random point processes, the heights are also independent of the point process. Let $C_{\rm pool}>0$ be a chosen constant, which may depend on $d$ and the fixed model parameters, and let $m=\lceil C_{\rm pool}\log n\rceil$. We now create the graph at each layer. For every vertex $v$ and layer $\ell$, let $P_\ell(v)$ contain the $\min\{m,|X_\ell|-1\}$ nearest points to $v$ in $X_\ell\setminus\{v\}$. For each occupied sector, add an arc from $v$ to the nearest point in $P_\ell(v)\cap\cC_j(v)$. If the nearest neighbor is not unique, choose one arbitrarily. Denote the resulting out-neighbors by $N_\ell^+(v)$. By construction, $|N_\ell^+(v)|\le K$. Whenever an arc points into a sector, its endpoint is a closest point to $v$ among all layer-$\ell$ points in that sector.

\begin{algorithm}[t]
\small
\caption{Construct the out-neighbor hierarchy}\label{alg:build}
\begin{algorithmic}[1]
\REQUIRE Finite point set $X_0\subset\T^d$; size parameter $n>1$; $\rho\in(0,1)$; $D,C_{\rm pool}>0$; sector partition $(\mathcal U_j)_{j=1}^K$;
\STATE Set $L$ as in \eqref{eq:hierarchy} and $m=\lceil C_{\rm pool}\log n\rceil$
\STATE Independently sample heights with $\PP(H_x\ge\ell)=\rho^\ell$ and form $(X_\ell)_{\ell=0}^L$
\FOR{$\ell=0,\ldots,L$}
 \FOR{$v\in X_\ell$}
  \STATE $N_\ell^+(v)\gets\varnothing$
  \STATE $P_\ell(v)\gets$ the $\min\{m,|X_\ell|-1\}$ nearest points to $v$ in $X_\ell\setminus\{v\}$
  \FOR{$j=1,\ldots,K$}
   \IF{$P_\ell(v)\cap\cC_j(v)\neq\varnothing$}
    \STATE Add $\arg\min_{u\in P_\ell(v)\cap\cC_j(v)}d(v,u)$ to $N_\ell^+(v)$
   \ENDIF
  \ENDFOR
 \ENDFOR
\ENDFOR
\RETURN $(X_\ell)_{\ell=0}^L$ and $\{N_\ell^+(v):0\le\ell\le L,\ v\in X_\ell\}$
\end{algorithmic}
\end{algorithm}

\subsection{Greedy queries}
The routing algorithm (Algorithm~\ref{alg:query}) uses only the edges to out-neighbors. It first reaches a local
minimum on the current layer, then reuses that vertex as the entry point on
the next denser layer. Upper layers therefore provide long-range progress,
while lower layers refine the search.

To be more specific, given $q\in\T^d$, start at any vertex of $X_L$. At a current vertex $v$ in layer $\ell$, choose an out-neighbor $w$ minimizing $d(w,q)$. Thus $d(w,q)\le d(u,q)$ for every $u\in N_\ell^+(v)$. Move to $w$ only if $d(w,q)<d(v,q)$. If there is no out-neighbor, or none is strictly closer to $q$ than $v$, stop the search on this layer and, if $\ell>0$, use $v$ as the entry point on layer $\ell-1$. The vertex where the search stops on layer $\ell$ is called the terminal vertex and is denoted by $W_\ell$. The algorithm returns $W_0$. If $X_L=\varnothing$, the routing fails.

\begin{algorithm}[t]
\small
\caption{Greedy top-down search for query $q$}\label{alg:query}
\begin{algorithmic}[1]
\REQUIRE Query $q\in\T^d$; hierarchy $(X_\ell)_{\ell=0}^L$ and neighbor lists $N_\ell^+(v)$ from Algorithm~\ref{alg:build}; 
\IF{$X_L=\varnothing$}
 \RETURN failure
\ENDIF
\STATE Choose any $v\in X_L$
\FOR{$\ell=L,L-1,\ldots,0$}
 \WHILE{$\exists\,u\in N_\ell^+(v)$ with $d(u,q)<d(v,q)$}
  \STATE $v\gets\arg\min_{u\in N_\ell^+(v)}d(u,q)$
 \ENDWHILE
 \STATE $W_\ell\gets v$
\ENDFOR
\RETURN $W_0$
\end{algorithmic}
\end{algorithm}

Let $T_\ell$ be the total number of greedy hops at layer $\ell$ and set
\begin{equation}
 T_{\rm tot}=\sum_{\ell=0}^L T_\ell,\qquad
 N_{\rm tot}=\sum_{\ell=0}^L|X_\ell|,\qquad
 A_{\rm tot}=\sum_{\ell=0}^L\sum_{v\in X_\ell}|N_\ell^+(v)|.
 \label{eq:resources}
\end{equation}
If $X_L=\varnothing$, set $T_\ell=0$ for all $\ell$. Algorithm~\ref{alg:query} uses
$O(K(T_{\rm tot}+L+1))$ distance queries, where $K$ is the number of sectors.

\section{Main results}
\label{sec:results}

\subsection{Class of navigable hierarchies}
We now define the deterministic coverage class of hierarchies that ensure greedy search with $(1+\eps)$ accuracy.

\begin{definition}[Coverage class]
\label{def:class}
Let $r_{\rm cov}>0$ and let $h_\ell$ be as in~\eqref{eq:scale}. Let $\mathfrak C_n$ consist of the finite nested configurations
$(X_0,\ldots,X_L)$ for which, at every $v\in X_\ell$,
\begin{enumerate}
\item at least $m=\lceil C_{\rm pool}\log n\rceil$ points of $X_\ell\setminus\{v\}$ lie in $B(v,r_{\rm cov}h_\ell)$;
\item $P_\ell(v)\cap\cC_j(v)\neq \emptyset$ for all $j$.
\end{enumerate}
\end{definition}
Define $R_\ell(q):=\min_{x\in X_\ell}d(x,q)$, with
$R_\ell(q):=\infty$ when $X_\ell=\varnothing$. The next theorem shows that, whenever an HNSW-inspired hierarchy belongs to the coverage class $\mathfrak C_n$, greedy search terminates close to the nearest point to the query.

\begin{maintheorem}[Deterministic $(1+\eps)$ accuracy]
\label{thm:deterministic}
Fix $d\ge1$ and $n>1$. For every $\eps>0$, there exists a sector diameter threshold $\omega_\varepsilon<\pi/3$ such that, for every hierarchy in $\mathfrak C_n$ with $\omega<\omega_\varepsilon$, the terminal vertex satisfies
\begin{equation}
 d(W_\ell,q)\le\min\{Bh_\ell,(1+\eps)R_\ell(q)\}.
 \label{eq:main-ann}
\end{equation}
Here, $B>r_{\rm cov}/(2\cos\omega)$ is any constant satisfying $Bh_L<1/4$. The bound holds simultaneously for every $q\in\T^d$, every nonempty layer $\ell$, and every starting vertex in that layer.
Furthermore, if $q\in X_\ell$, the layer search reaches $q$ exactly.
\end{maintheorem}

Geometrically distributed heights are standard in HNSW~\cite{MalkovYashunin2020}. However, Theorem~\ref{thm:deterministic} also applies to hierarchies generated by more general height distributions, provided that the conditions of Definition~\ref{def:class} hold.

\subsection{Accuracy for random point processes}
We now consider three classes of random inputs, for which we will show that Algorithm~\ref{alg:query} is successful with high probability. Poisson processes are the standard benchmark for point processes, and give a model where all points are independent and uniformly distributed. Determinantal point processes introduce local repulsion, while Cox processes allow clustering through a random intensity field; see Figure~\ref{fig:input-families}. Thus, these classes support wide ranges of point processes, allowing us to show that our navigation argument holds in a wide variety of settings.
We define these briefly below.
For background on Poisson and Cox processes, see
\cite{LastPenrose2017}; for determinantal point processes,
see \cite{Hough2006}.

\begin{figure}[t]
\centering
\begin{tikzpicture}[x=1cm,y=1cm,font=\footnotesize]
\begin{scope}[xshift=0.0cm]
\node[black,font=\scriptsize\bfseries] at (1.5,3.55) {Homogeneous Poisson};
\node[black,font=\scriptsize] at (1.5,3.25) {Independent uniform points};
\draw[fill=hsnwGray!3,draw=none,line width=.3pt] (-.1,-.1) rectangle (3.1,3.1);
\fill[hsnwBlue] (0.041758,1.631900) circle (1.15pt);
\fill[hsnwBlue] (0.516674,1.758272) circle (1.15pt);
\fill[hsnwBlue] (0.072282,2.268881) circle (1.15pt);
\fill[hsnwBlue] (0.935284,2.023033) circle (1.15pt);
\fill[hsnwBlue] (2.805072,1.549593) circle (1.15pt);
\fill[hsnwBlue] (2.927656,2.524900) circle (1.15pt);
\fill[hsnwBlue] (1.286537,2.164746) circle (1.15pt);
\fill[hsnwBlue] (0.773539,1.785168) circle (1.15pt);
\fill[hsnwBlue] (2.376776,1.447268) circle (1.15pt);
\fill[hsnwBlue] (2.120366,1.054056) circle (1.15pt);
\fill[hsnwBlue] (2.315908,2.131338) circle (1.15pt);
\fill[hsnwBlue] (1.063007,2.817531) circle (1.15pt);
\fill[hsnwBlue] (1.755367,2.825267) circle (1.15pt);
\fill[hsnwBlue] (2.045921,2.813375) circle (1.15pt);
\fill[hsnwBlue] (1.565139,0.716377) circle (1.15pt);
\fill[hsnwBlue] (2.331514,1.203747) circle (1.15pt);
\fill[hsnwBlue] (2.786209,0.419563) circle (1.15pt);
\fill[hsnwBlue] (2.308799,0.282771) circle (1.15pt);
\fill[hsnwBlue] (1.668151,0.734698) circle (1.15pt);
\fill[hsnwBlue] (2.281544,2.471106) circle (1.15pt);
\fill[hsnwBlue] (0.811723,2.245271) circle (1.15pt);
\fill[hsnwBlue] (1.109922,1.892215) circle (1.15pt);
\fill[hsnwBlue] (0.160886,1.142119) circle (1.15pt);
\fill[hsnwBlue] (0.967650,0.635392) circle (1.15pt);
\fill[hsnwBlue] (0.870443,1.531009) circle (1.15pt);
\fill[hsnwBlue] (2.567839,2.097735) circle (1.15pt);
\fill[hsnwBlue] (2.259089,2.887376) circle (1.15pt);
\fill[hsnwBlue] (1.797915,0.001416) circle (1.15pt);
\fill[hsnwBlue] (2.358192,0.437945) circle (1.15pt);
\fill[hsnwBlue] (0.860874,1.307042) circle (1.15pt);
\fill[hsnwBlue] (2.500802,0.701200) circle (1.15pt);
\fill[hsnwBlue] (2.999672,0.250871) circle (1.15pt);
\fill[hsnwBlue] (0.838617,0.673860) circle (1.15pt);
\fill[hsnwBlue] (2.665439,0.276299) circle (1.15pt);
\fill[hsnwBlue] (1.099898,1.759340) circle (1.15pt);
\fill[hsnwBlue] (1.291166,2.851181) circle (1.15pt);
\fill[hsnwBlue] (0.598009,0.668797) circle (1.15pt);
\fill[hsnwBlue] (2.428858,2.747614) circle (1.15pt);
\fill[hsnwBlue] (2.735817,2.130390) circle (1.15pt);
\fill[hsnwBlue] (1.637672,2.127942) circle (1.15pt);
\fill[hsnwBlue] (2.092526,0.010102) circle (1.15pt);
\fill[hsnwBlue] (0.130492,2.645117) circle (1.15pt);
\fill[hsnwBlue] (1.297509,0.107775) circle (1.15pt);
\fill[hsnwBlue] (2.245051,2.467247) circle (1.15pt);
\fill[hsnwBlue] (0.651867,1.872120) circle (1.15pt);
\fill[hsnwBlue] (2.955757,0.185430) circle (1.15pt);
\fill[hsnwBlue] (0.256696,0.378063) circle (1.15pt);
\fill[hsnwBlue] (0.697259,0.691808) circle (1.15pt);
\fill[hsnwBlue] (1.782599,0.544646) circle (1.15pt);
\fill[hsnwBlue] (1.121083,0.565403) circle (1.15pt);
\fill[hsnwBlue] (1.267545,0.900071) circle (1.15pt);
\fill[hsnwBlue] (2.723832,1.342557) circle (1.15pt);
\fill[hsnwBlue] (0.753146,0.699846) circle (1.15pt);
\fill[hsnwBlue] (0.657829,2.677189) circle (1.15pt);
\fill[hsnwBlue] (2.662283,1.149885) circle (1.15pt);
\fill[hsnwBlue] (2.644966,2.157162) circle (1.15pt);
\fill[hsnwBlue] (0.615612,1.766302) circle (1.15pt);
\fill[hsnwBlue] (2.752446,2.023451) circle (1.15pt);
\fill[hsnwBlue] (0.738382,0.935259) circle (1.15pt);
\fill[hsnwBlue] (1.815012,1.960428) circle (1.15pt);
\fill[hsnwBlue] (0.038984,1.582941) circle (1.15pt);
\fill[hsnwBlue] (0.995920,2.828991) circle (1.15pt);
\fill[hsnwBlue] (2.706176,0.820977) circle (1.15pt);
\fill[hsnwBlue] (1.844322,0.339513) circle (1.15pt);
\fill[hsnwBlue] (1.399316,2.256756) circle (1.15pt);
\fill[hsnwBlue] (2.428112,1.588660) circle (1.15pt);
\fill[hsnwBlue] (0.543016,2.714513) circle (1.15pt);
\fill[hsnwBlue] (0.136592,2.205182) circle (1.15pt);
\fill[hsnwBlue] (1.455083,0.052091) circle (1.15pt);
\fill[hsnwBlue] (2.505422,1.308335) circle (1.15pt);
\fill[hsnwBlue] (2.862535,0.293691) circle (1.15pt);
\fill[hsnwBlue] (0.791200,1.710869) circle (1.15pt);
\fill[hsnwBlue] (0.656943,0.930334) circle (1.15pt);
\fill[hsnwBlue] (0.859806,1.328153) circle (1.15pt);
\fill[hsnwBlue] (2.249120,0.862579) circle (1.15pt);
\fill[hsnwBlue] (2.696414,1.608619) circle (1.15pt);
\fill[hsnwBlue] (1.833585,1.434029) circle (1.15pt);
\fill[hsnwBlue] (1.531877,2.765134) circle (1.15pt);
\fill[hsnwBlue] (1.446468,2.107873) circle (1.15pt);
\fill[hsnwBlue] (1.996831,0.913625) circle (1.15pt);
\fill[hsnwBlue] (1.966449,2.313138) circle (1.15pt);
\fill[hsnwBlue] (0.137271,1.064218) circle (1.15pt);
\fill[hsnwBlue] (0.717656,1.747563) circle (1.15pt);
\fill[hsnwBlue] (2.860667,2.619901) circle (1.15pt);
\fill[hsnwBlue] (2.228509,0.179651) circle (1.15pt);
\fill[hsnwBlue] (1.891449,1.039222) circle (1.15pt);
\fill[hsnwBlue] (0.465598,0.189834) circle (1.15pt);
\fill[hsnwBlue] (2.984695,2.600428) circle (1.15pt);
\fill[hsnwBlue] (0.271208,0.636205) circle (1.15pt);
\fill[hsnwBlue] (1.701109,2.310165) circle (1.15pt);
\fill[hsnwBlue] (1.399212,2.049498) circle (1.15pt);
\fill[hsnwBlue] (1.404358,2.687667) circle (1.15pt);
\end{scope}
\begin{scope}[xshift=3.9cm]
\node[black,font=\scriptsize\bfseries] at (1.5,3.55) {Fourier projection DPP};
\node[black,font=\scriptsize] at (1.5,3.25) {Repulsive points};
\draw[fill=hsnwGray!3,draw=none,line width=.3pt] (-.1,-.1) rectangle (3.1,3.1);
\fill[hsnwBlue] (0.767902,1.931774) circle (1.15pt);
\fill[hsnwBlue] (1.813622,1.253838) circle (1.15pt);
\fill[hsnwBlue] (0.474586,0.090445) circle (1.15pt);
\fill[hsnwBlue] (1.227085,0.397344) circle (1.15pt);
\fill[hsnwBlue] (2.964540,2.526122) circle (1.15pt);
\fill[hsnwBlue] (1.875080,0.389790) circle (1.15pt);
\fill[hsnwBlue] (1.322935,0.139361) circle (1.15pt);
\fill[hsnwBlue] (0.190429,1.881598) circle (1.15pt);
\fill[hsnwBlue] (0.941060,0.262630) circle (1.15pt);
\fill[hsnwBlue] (2.878660,0.454054) circle (1.15pt);
\fill[hsnwBlue] (0.014839,0.245584) circle (1.15pt);
\fill[hsnwBlue] (1.322475,2.371382) circle (1.15pt);
\fill[hsnwBlue] (0.585983,2.223927) circle (1.15pt);
\fill[hsnwBlue] (0.928281,0.985155) circle (1.15pt);
\fill[hsnwBlue] (0.420136,0.644535) circle (1.15pt);
\fill[hsnwBlue] (1.503404,0.162092) circle (1.15pt);
\fill[hsnwBlue] (2.453040,2.404241) circle (1.15pt);
\fill[hsnwBlue] (0.800277,0.096049) circle (1.15pt);
\fill[hsnwBlue] (1.831431,0.255804) circle (1.15pt);
\fill[hsnwBlue] (0.359428,2.940073) circle (1.15pt);
\fill[hsnwBlue] (2.376113,1.212211) circle (1.15pt);
\fill[hsnwBlue] (1.711881,0.758926) circle (1.15pt);
\fill[hsnwBlue] (0.857342,2.694723) circle (1.15pt);
\fill[hsnwBlue] (1.296056,2.074732) circle (1.15pt);
\fill[hsnwBlue] (2.503030,1.384413) circle (1.15pt);
\fill[hsnwBlue] (0.745530,2.437801) circle (1.15pt);
\fill[hsnwBlue] (0.269638,1.092924) circle (1.15pt);
\fill[hsnwBlue] (2.268456,0.300332) circle (1.15pt);
\fill[hsnwBlue] (0.480686,2.148002) circle (1.15pt);
\fill[hsnwBlue] (0.507398,2.290494) circle (1.15pt);
\fill[hsnwBlue] (2.173916,1.694362) circle (1.15pt);
\fill[hsnwBlue] (2.088728,1.224798) circle (1.15pt);
\fill[hsnwBlue] (2.006175,0.870607) circle (1.15pt);
\fill[hsnwBlue] (1.941173,2.598729) circle (1.15pt);
\fill[hsnwBlue] (2.096987,1.345686) circle (1.15pt);
\fill[hsnwBlue] (1.847751,0.494913) circle (1.15pt);
\fill[hsnwBlue] (0.276752,2.477976) circle (1.15pt);
\fill[hsnwBlue] (1.013726,2.326571) circle (1.15pt);
\fill[hsnwBlue] (0.824076,2.064839) circle (1.15pt);
\fill[hsnwBlue] (2.855687,0.603811) circle (1.15pt);
\fill[hsnwBlue] (0.383077,0.984645) circle (1.15pt);
\fill[hsnwBlue] (1.622010,1.880472) circle (1.15pt);
\fill[hsnwBlue] (0.571353,2.103267) circle (1.15pt);
\fill[hsnwBlue] (1.472819,0.603508) circle (1.15pt);
\fill[hsnwBlue] (1.530697,2.487762) circle (1.15pt);
\fill[hsnwBlue] (2.734952,2.639252) circle (1.15pt);
\fill[hsnwBlue] (0.846929,0.321436) circle (1.15pt);
\fill[hsnwBlue] (0.853151,0.625467) circle (1.15pt);
\fill[hsnwBlue] (2.035834,2.812863) circle (1.15pt);
\fill[hsnwBlue] (2.309093,1.979030) circle (1.15pt);
\fill[hsnwBlue] (2.509329,2.621924) circle (1.15pt);
\fill[hsnwBlue] (1.547261,1.406437) circle (1.15pt);
\fill[hsnwBlue] (2.456417,0.880946) circle (1.15pt);
\fill[hsnwBlue] (2.114729,0.042909) circle (1.15pt);
\fill[hsnwBlue] (0.913213,1.581838) circle (1.15pt);
\fill[hsnwBlue] (2.670834,2.247714) circle (1.15pt);
\fill[hsnwBlue] (0.711656,0.962727) circle (1.15pt);
\fill[hsnwBlue] (1.707357,2.207339) circle (1.15pt);
\fill[hsnwBlue] (2.669595,0.991099) circle (1.15pt);
\fill[hsnwBlue] (2.969272,2.966693) circle (1.15pt);
\fill[hsnwBlue] (2.879740,2.183401) circle (1.15pt);
\fill[hsnwBlue] (1.575636,1.727790) circle (1.15pt);
\fill[hsnwBlue] (0.101438,1.609757) circle (1.15pt);
\fill[hsnwBlue] (0.624740,0.376456) circle (1.15pt);
\fill[hsnwBlue] (0.531093,2.787575) circle (1.15pt);
\fill[hsnwBlue] (1.373571,1.050727) circle (1.15pt);
\fill[hsnwBlue] (2.731958,1.748547) circle (1.15pt);
\fill[hsnwBlue] (2.283065,0.947897) circle (1.15pt);
\fill[hsnwBlue] (1.202237,1.700043) circle (1.15pt);
\fill[hsnwBlue] (1.754429,2.714194) circle (1.15pt);
\fill[hsnwBlue] (1.766629,0.325340) circle (1.15pt);
\fill[hsnwBlue] (2.794901,2.439965) circle (1.15pt);
\fill[hsnwBlue] (1.714817,0.092868) circle (1.15pt);
\fill[hsnwBlue] (1.087948,0.740029) circle (1.15pt);
\fill[hsnwBlue] (1.208460,1.348869) circle (1.15pt);
\fill[hsnwBlue] (1.263754,1.210092) circle (1.15pt);
\fill[hsnwBlue] (0.449755,1.362934) circle (1.15pt);
\fill[hsnwBlue] (1.222606,2.583091) circle (1.15pt);
\fill[hsnwBlue] (1.359765,0.832281) circle (1.15pt);
\fill[hsnwBlue] (0.010037,1.982393) circle (1.15pt);
\fill[hsnwBlue] (1.444238,2.639303) circle (1.15pt);
\fill[hsnwBlue] (1.805568,2.177299) circle (1.15pt);
\fill[hsnwBlue] (2.348091,0.670232) circle (1.15pt);
\fill[hsnwBlue] (2.505002,0.284133) circle (1.15pt);
\fill[hsnwBlue] (2.736410,2.796700) circle (1.15pt);
\fill[hsnwBlue] (2.945176,1.449363) circle (1.15pt);
\fill[hsnwBlue] (1.960305,1.991407) circle (1.15pt);
\fill[hsnwBlue] (0.081866,2.223252) circle (1.15pt);
\fill[hsnwBlue] (0.625986,1.633067) circle (1.15pt);
\fill[hsnwBlue] (2.444521,1.707942) circle (1.15pt);
\fill[hsnwBlue] (2.630245,1.484001) circle (1.15pt);
\fill[hsnwBlue] (2.641573,1.143010) circle (1.15pt);
\fill[hsnwBlue] (0.267688,1.639422) circle (1.15pt);
\fill[hsnwBlue] (0.998677,1.250328) circle (1.15pt);
\fill[hsnwBlue] (2.860104,0.332470) circle (1.15pt);
\fill[hsnwBlue] (0.866817,1.342033) circle (1.15pt);
\fill[hsnwBlue] (1.155610,2.822448) circle (1.15pt);
\fill[hsnwBlue] (1.801856,2.457295) circle (1.15pt);
\fill[hsnwBlue] (1.928179,1.138716) circle (1.15pt);
\fill[hsnwBlue] (0.240236,0.947477) circle (1.15pt);
\end{scope}
\begin{scope}[xshift=7.8cm]
\node[black,font=\scriptsize\bfseries] at (1.5,3.55) {Bounded-density Cox};
\node[black,font=\scriptsize] at (1.5,3.25) {Clustered points};
\draw[fill=hsnwGray!3,draw=none,line width=.3pt] (-.1,-.1) rectangle (3.1,3.1);
\fill[hsnwBlue] (1.471009,2.486555) circle (1.15pt);
\fill[hsnwBlue] (0.889696,0.824027) circle (1.15pt);
\fill[hsnwBlue] (0.777390,1.173319) circle (1.15pt);
\fill[hsnwBlue] (0.941424,0.806857) circle (1.15pt);
\fill[hsnwBlue] (1.991007,0.708935) circle (1.15pt);
\fill[hsnwBlue] (2.179423,0.726869) circle (1.15pt);
\fill[hsnwBlue] (2.261367,0.849525) circle (1.15pt);
\fill[hsnwBlue] (2.114529,1.117225) circle (1.15pt);
\fill[hsnwBlue] (2.093662,0.806741) circle (1.15pt);
\fill[hsnwBlue] (1.556485,2.359610) circle (1.15pt);
\fill[hsnwBlue] (1.721676,2.157011) circle (1.15pt);
\fill[hsnwBlue] (2.367213,0.644322) circle (1.15pt);
\fill[hsnwBlue] (2.380178,1.007868) circle (1.15pt);
\fill[hsnwBlue] (0.773810,0.787657) circle (1.15pt);
\fill[hsnwBlue] (2.178238,2.242855) circle (1.15pt);
\fill[hsnwBlue] (1.887062,2.519704) circle (1.15pt);
\fill[hsnwBlue] (0.753917,0.602668) circle (1.15pt);
\fill[hsnwBlue] (1.865160,2.004774) circle (1.15pt);
\fill[hsnwBlue] (0.434665,0.714071) circle (1.15pt);
\fill[hsnwBlue] (1.625845,1.800698) circle (1.15pt);
\fill[hsnwBlue] (2.046623,0.842574) circle (1.15pt);
\fill[hsnwBlue] (1.589327,2.321520) circle (1.15pt);
\fill[hsnwBlue] (0.897178,0.544916) circle (1.15pt);
\fill[hsnwBlue] (2.345732,0.947744) circle (1.15pt);
\fill[hsnwBlue] (2.730324,2.223978) circle (1.15pt);
\fill[hsnwBlue] (0.549321,0.809086) circle (1.15pt);
\fill[hsnwBlue] (2.113226,0.787129) circle (1.15pt);
\fill[hsnwBlue] (2.515824,0.818043) circle (1.15pt);
\fill[hsnwBlue] (1.310037,2.152671) circle (1.15pt);
\fill[hsnwBlue] (1.388341,2.079403) circle (1.15pt);
\fill[hsnwBlue] (1.369283,2.327770) circle (1.15pt);
\fill[hsnwBlue] (2.149734,0.742197) circle (1.15pt);
\fill[hsnwBlue] (1.537034,1.962393) circle (1.15pt);
\fill[hsnwBlue] (2.392461,0.950620) circle (1.15pt);
\fill[hsnwBlue] (1.205731,2.455858) circle (1.15pt);
\fill[hsnwBlue] (1.886699,2.988507) circle (1.15pt);
\fill[hsnwBlue] (0.900050,0.745275) circle (1.15pt);
\fill[hsnwBlue] (2.556901,0.116205) circle (1.15pt);
\fill[hsnwBlue] (0.563328,0.711351) circle (1.15pt);
\fill[hsnwBlue] (1.373273,2.199086) circle (1.15pt);
\fill[hsnwBlue] (2.356505,0.718829) circle (1.15pt);
\fill[hsnwBlue] (1.036453,1.632639) circle (1.15pt);
\fill[hsnwBlue] (0.842864,0.966927) circle (1.15pt);
\fill[hsnwBlue] (0.837303,0.913815) circle (1.15pt);
\fill[hsnwBlue] (1.920270,2.438030) circle (1.15pt);
\fill[hsnwBlue] (1.426764,2.304139) circle (1.15pt);
\fill[hsnwBlue] (1.626241,2.226938) circle (1.15pt);
\fill[hsnwBlue] (2.380667,0.871342) circle (1.15pt);
\fill[hsnwBlue] (0.850275,0.919066) circle (1.15pt);
\fill[hsnwBlue] (0.717932,0.587427) circle (1.15pt);
\fill[hsnwBlue] (0.943890,0.772440) circle (1.15pt);
\fill[hsnwBlue] (0.611513,0.650223) circle (1.15pt);
\fill[hsnwBlue] (1.793641,2.320473) circle (1.15pt);
\fill[hsnwBlue] (0.623820,0.865697) circle (1.15pt);
\fill[hsnwBlue] (1.523357,2.111585) circle (1.15pt);
\fill[hsnwBlue] (1.481332,2.340215) circle (1.15pt);
\fill[hsnwBlue] (0.923879,0.626013) circle (1.15pt);
\fill[hsnwBlue] (0.664707,1.116444) circle (1.15pt);
\fill[hsnwBlue] (1.560494,2.315911) circle (1.15pt);
\fill[hsnwBlue] (2.497428,0.886138) circle (1.15pt);
\fill[hsnwBlue] (2.277705,0.889994) circle (1.15pt);
\fill[hsnwBlue] (1.423205,2.020577) circle (1.15pt);
\fill[hsnwBlue] (2.170449,0.708198) circle (1.15pt);
\fill[hsnwBlue] (0.793033,0.671886) circle (1.15pt);
\fill[hsnwBlue] (0.836556,0.656572) circle (1.15pt);
\fill[hsnwBlue] (1.648776,2.206221) circle (1.15pt);
\fill[hsnwBlue] (2.471137,2.514888) circle (1.15pt);
\fill[hsnwBlue] (0.718131,0.835385) circle (1.15pt);
\fill[hsnwBlue] (2.366549,0.913699) circle (1.15pt);
\fill[hsnwBlue] (2.415995,0.603552) circle (1.15pt);
\fill[hsnwBlue] (0.719563,0.613238) circle (1.15pt);
\fill[hsnwBlue] (0.796027,0.885735) circle (1.15pt);
\fill[hsnwBlue] (2.140483,2.676441) circle (1.15pt);
\fill[hsnwBlue] (2.291589,0.988933) circle (1.15pt);
\fill[hsnwBlue] (0.608123,1.234122) circle (1.15pt);
\fill[hsnwBlue] (2.143174,1.694769) circle (1.15pt);
\fill[hsnwBlue] (2.060879,0.860266) circle (1.15pt);
\fill[hsnwBlue] (2.536017,0.775399) circle (1.15pt);
\fill[hsnwBlue] (1.516173,2.178816) circle (1.15pt);
\fill[hsnwBlue] (1.699554,2.539166) circle (1.15pt);
\fill[hsnwBlue] (2.685085,2.886646) circle (1.15pt);
\fill[hsnwBlue] (0.701456,0.878426) circle (1.15pt);
\fill[hsnwBlue] (1.653044,2.323017) circle (1.15pt);
\fill[hsnwBlue] (2.315662,0.960638) circle (1.15pt);
\fill[hsnwBlue] (1.491519,2.231361) circle (1.15pt);
\fill[hsnwBlue] (0.236441,1.478424) circle (1.15pt);
\fill[hsnwBlue] (0.575707,2.582506) circle (1.15pt);
\fill[hsnwBlue] (0.910988,0.774753) circle (1.15pt);
\fill[hsnwBlue] (2.529207,0.719555) circle (1.15pt);
\fill[hsnwBlue] (0.649918,0.775791) circle (1.15pt);
\fill[hsnwBlue] (2.288244,0.781244) circle (1.15pt);
\fill[hsnwBlue] (2.552015,1.054303) circle (1.15pt);
\fill[hsnwBlue] (0.467236,0.659198) circle (1.15pt);
\fill[hsnwBlue] (1.432476,2.398917) circle (1.15pt);
\end{scope}
\end{tikzpicture}
\caption{Point processes on $\mathbb T^2$, with common expected count $n = 100$. Left: homogeneous Poisson intensity $n$. Middle: the Fourier projection DPP. Right: Cox with intensity having three modes at three different points.} 
\Description{Simulated independent Poisson points, repulsive Fourier determinantal points, and a bounded-density Cox process with three clusters near the vertices of a triangle, each model having expected count 100.}
\label{fig:input-families}
\end{figure}

A random input $X_0$ is a finite point process on $\T^d$. 
We describe the three random input models through their
joint intensities.

\begin{definition}[Joint intensities]
\label{def:inputs}
Let $X_0$ be a simple point process on $\T^d$. Its joint intensities with respect to
Lebesgue measure are nonnegative measurable functions
$\varphi^{(k)}:(\T^d)^k\to[0,\infty)$, $k\ge1$, such that, for every
collection of pairwise disjoint Borel sets
$A_1,\ldots,A_k\subseteq\T^d$,
\[
\mathbb E\Big[\prod_{i=1}^k |X_0\cap A_i|\Big] =\int_{A_1\times\cdots\times A_k} \varphi^{(k)}(x_1,\ldots,x_k)\,dx_1\cdots dx_k.
\]

The three input families have the following joint
intensities. The constants $0<a\le b<\infty$ below are
independent of $n$ and $d$.

\begin{enumerate}
\item \emph{Poisson input.}
A homogeneous Poisson process with rate $\lambda\in (0,\infty)$ has
$\varphi^{(k)}(x_1,\ldots,x_k)=\lambda^k$.
We take $\lambda=n$, giving $n$ points in expectation
on the unit-volume torus. Under this process, the numbers of points in disjoint sets $A_i$ are independent.

\item \emph{Determinantal input.
A Hermitian determinantal point process (DPP) has}
\[
\varphi^{(k)}(x_1,\ldots,x_k) =\det[\cK_n(x_i,x_j)]_{i,j=1}^k,
\]
where the kernel $\cK_n$ is measurable and Hermitian (i.e., $\cK_n(x,y) = \overline{\cK_n(y,x)}$). We assume that $an\le \cK_n(x,x)\le bn$ almost everywhere and that $\cK_n(x,x)=\sum_i\alpha_{n,i}|\phi_{n,i}(x)|^2$ for almost every $x\in\T^d,$ where $\alpha_{n,i}$ and $\phi_{n,i}$ are the eigenvalues and orthonormal eigenfunctions of the associated integral operator. Additionally, we assume that $0\le\alpha_{n,i}\le 1$ for every $i$ and $\sum_i\alpha_{n,i}<\infty$.

\item \emph{Cox input.}
The Cox process has 
\[
\varphi^{(k)}(x_1,\ldots,x_k)=n^k\mathbb E\Big[\prod_{i=1}^kF_n(x_i)\Big],
\]
where $F_n$ is a jointly measurable random function. 
Equivalently, conditional on $F_n$, $X_0$ is a Poisson process with inhomogeneous intensity $nF_n(x)$. We assume $a\le F_n(x)\le b$ almost everywhere, almost surely. We refer to the random function $F_n$ as the environment.
\end{enumerate}
\end{definition}

Under these sampling distributions, we will consider the events $\cG_n=\{(X_\ell)_{\ell\le L}\in
\mathfrak C_n\}$ and
$\cH_n=\cG_n \cap\{X_L\ne\varnothing\}$.
\noindent The first property of the coverage class $\mathfrak C_n$ in Definition~\ref{def:class} ensures that the greedy arcs on each layer are not too long.
The second property ensures that the graph contains arcs toward every directional sector $\cC_j(v)$. By construction, in $\mathfrak C_n$, every selected sector neighbor is nearest in that sector and lies within $r_{\rm cov}h_\ell$ of its source.
The next theorem shows that each of the three input models yields a randomly sampled hierarchy satisfying the directional coverage condition with high probability, and hence yields a uniform $(1+\varepsilon)$-approximation guarantee for greedy search.

\begin{maintheorem}[Accuracy for Poisson, determinantal and Cox input]
\label{thm:stochastic}
Fix $\eps>0$, a retention probability $\rho\in(0,1)$,
and a failure exponent $M>2$.
Let $d=o(\log n/\log\log n)$.
Suppose that $X_0$ is one of the three point process inputs
described above, with the stated intensity bounds.
There exist sector partitions and dimension-dependent
parameters with
\[
\begin{gathered}
\omega<\omega_\varepsilon,\quad
K=\exp(O(d)), \quad 
C_{\rm pool}=\exp(O(d)),\quad r_{\rm cov},B=O(\sqrt d),\quad
D=\exp\!\left(\tfrac12 d\log d+O(d)\right),
\end{gathered}
\]
such that, for sufficiently large $n$,
\begin{equation}
\PP(\cG_n^c)=o(n^{-M}),\qquad
\PP(\cH_n^c)=O(n^{-M}).
\label{eq:good-probability}
\end{equation}
Thus, with probability $1-O(n^{-M})$, the approximation guarantees of Theorem~\ref{thm:deterministic} hold.
Explicit parameter choices are given in Appendix~\ref{app:dimension}.
\end{maintheorem}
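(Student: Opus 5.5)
The plan is to reduce $\cG_n$ to finitely many local count events and control each one by a Chernoff-type bound valid for all three input families. Conditionally on $X_0$, the thinned layer $X_\ell$ is obtained by independent $\rho^\ell$-retention. Independent thinning keeps each family inside its class. A thinned Poisson process is Poisson with rate $n_\ell$. A thinned Cox process is Cox with environment $F_n$ and scale $n_\ell$. A thinned DPP is determinantal with kernel $\rho^\ell\cK_n$, whose eigenvalues still lie in $[0,1]$ and whose diagonal lies in $[an_\ell,bn_\ell]$. So it suffices to prove, uniformly in $\ell\le L$, tail bounds for $|X_\ell\cap A|$ when $A$ is a ball or a ball intersected with a sector.

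The upper and lower Chernoff bounds come from the Laplace transform. For Poisson and Cox inputs we condition on $F_n$ and use $a\le F_n\le b$. For DPPs we use that $|X\cap A|$ is a sum of independent Bernoulli variables, with parameters the eigenvalues of the projected kernel $\1_A\cK\1_A$; its mean lies between $a n_\ell|A|$ and $bn_\ell|A|$. In every case
\[
\PP\bigl(|X_\ell\cap A|\le \tfrac12 a n_\ell|A|\bigr)\le \exp(-c\,a n_\ell|A|).
\]

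Next, uniformity over $v\in X_\ell$ and over all sectors is obtained by discretization. Cover $\T^d$ by a grid $G$ of mesh $\eta h_\ell$, which has $(\sqrt d/(\eta h_\ell))^d=n^{O(1)}$ points when $d=o(\log n/\log\log n)$. Fix radii $r_1=r_{\rm cov}h_\ell$ and a smaller $r_0$, chosen so that $c_d r_0^d\, a n_\ell\ge 2C_{\rm pool}\log n$. By Stirling, $c_d^{-1/d}=\Theta(\sqrt d)$, which is where $r_{\rm cov}=O(\sqrt d)$ comes from. For every grid point $g$ and every cell $\mathcal U_j$, consider a shrunken truncated cone $S_j(g)$. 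This is the set of points in the annulus $r_0/2\le \|x-g\|\le r_0$ whose direction lies in a cap of angular radius comparable to $p_*$-scale inside $\mathcal U_j$, shrunk so that $S_j(g)\subset \cC_j(v)\cap B(v,r_0)$ for every $v$ within $\eta h_\ell$ of $g$. This choice is possible once $\eta$ is small relative to the cap angle. The cap volume fraction is $\ge p_*/2$, and the partition is chosen with $K=\exp(O(d))$ cells of diameter $<\omega_\eps$ and $p_*\ge e^{-O(d)}$ (for example cells of a $\theta$-net Voronoi partition). Hence $\E|X_\ell\cap S_j(g)|\ge e^{-O(d)}\, c_d r_0^d\, a n_\ell$. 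Choosing $C_{\rm pool}=e^{O(d)}$ large makes each of these means at least $(M+O(1))\log n/c$.

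A union bound over $L=O(\log n)$ layers, $|G|=n^{O(1)}$ grid points and $K$ sectors then gives failure probability $o(n^{-M})$. On the complementary event, for every $v\in X_\ell$ we get two facts. First, $B(v,r_1)\supset B(g,r_0)$ contains at least $m$ points of $X_\ell$, which gives condition 1 of Definition~\ref{def:class}. Second, each sector $\cC_j(v)$ contains a point within distance $r_0$. To get condition 2 we also need $P_\ell(v)$, the $m$ nearest neighbours of $v$, to include such a point. For this, use an upper-tail Chernoff bound showing $|X_\ell\cap B(g,r_0+\eta h_\ell)|< m$ fails with small probability; this holds provided $C_{\rm pool}$ exceeds $b\,c_d(r_0+\eta h_\ell)^d n_\ell/\log n$ by a constant factor. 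The $m$ nearest points then include every point within $r_0$, since fewer than $m$ points lie there, and in particular include the sector witness. Balancing these choices, with $r_0$ fixed by the lower bound and $m$ from the upper bound, is consistent because the ratio $b/a$ is a constant.

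For $\cH_n$, note $\E|X_L|\ge a n_L\ge aD\log n$. The same lower-tail bound gives $\PP(X_L=\varnothing)\le e^{-caD\log n}=O(n^{-M})$ when $D$ is large. We also need $Bh_L<1/4$: since $h_L=(\log n/n_L)^{1/d}\le D^{-1/d}$, taking $D=\exp(\tfrac12 d\log d+O(d))$ makes $D^{1/d}\gtrsim\sqrt d\gtrsim B$. The same requirement keeps $r_0<1/2$ on the top layer, so ball volumes are Euclidean. Theorem~\ref{thm:deterministic} then applies on $\cH_n$.

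The main obstacle I expect is the DPP lower-tail bound uniformly over small sets together with the bookkeeping of dimension: making the grid size, $K$, $p_*^{-1}$, $\eta^{-d}$ all $n^{o(1)}$ or $n^{O(1)}$, which is exactly where $d\log d=o(\log n)$ enters.
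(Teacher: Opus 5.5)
Your grid discretization is a workable alternative to the paper's route, but as written it fails at one step: the choice of radii. You use the single scale $r_0$ for two incompatible jobs. For condition~1 you fix $r_0$ by $ac_dr_0^dn_\ell\ge 2C_{\rm pool}\log n$ and require $B(g,r_0)$ to contain at least $m$ points. For pool inclusion you require the larger ball $B(g,r_0+\eta h_\ell)$ to contain fewer than $m$ points. Since $B(g,r_0)\subseteq B(g,r_0+\eta h_\ell)$, these two events are disjoint, so your claimed high-probability good event is empty. The contradiction also shows up in your constants. With your $r_0$, the quantity $bc_d(r_0+\eta h_\ell)^dn_\ell/\log n$ is at least $(b/a)\cdot 2C_{\rm pool}\ge 2C_{\rm pool}$, so $C_{\rm pool}$ can never exceed it by a constant factor. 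Your closing claim that the balancing is consistent because $b/a$ is constant is therefore false: no choice of constants makes one radius both heavily and lightly populated.

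What is missing is two genuinely different scales, as in the paper's $r_*h_\ell$ and $r_{\rm cov}h_\ell$. Choose the inner radius from the upper bound, taking $r_0$ with $bc_d(r_0+\eta h_\ell)^dn_\ell=m/2$. Then with high probability $B(g,r_0+\eta h_\ell)\supseteq B(v,r_0)$ holds at most $m$ points. Since you arrange $S_j(g)\subset\cC_j(v)\cap B(v,r_0)$, every sector witness then lies in $P_\ell(v)$. The cones $S_j(g)$ still have mean at least $e^{-O(d)}(a/b)C_{\rm pool}\log n$. This exceeds $(M+O(1))\log n$ once $C_{\rm pool}\ge e^{O(d)}(b/a)(M+O(1))$. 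What reconciles few points in the inner ball with a witness in every sector is that $C_{\rm pool}$ is large relative to $p_*^{-1}$ and $b/a$, not that $b/a$ is constant. Condition~1 must then come from the outer radius alone. Use $B(g,r_1-\eta h_\ell)\subseteq B(v,r_1)$ with $ac_d(r_1-\eta h_\ell)^dn_\ell\ge 2(m+1)$, which still gives $r_{\rm cov}=O(\sqrt d)$.

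With this repair your argument does differ genuinely from the paper's. The paper bounds the three count events under the reduced Palm law at each vertex and sums them with the Campbell identity, which for DPPs needs the Palm-kernel estimate of Lemma~\ref{lem:palm-loss}. Your deterministic sets $S_j(g)$, with a union bound over $n^{1+o(1)}$ grid points, need only Poisson MGF domination for counts in fixed sets, at the price of the cone-shrinking geometry.
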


\subsection{Number of hops}
The next theorem gives bounds on the total number of hops as defined in \eqref{eq:resources} for all three point process input classes.

\begin{maintheorem}[Hop count]
\label{thm:complexity}
Let $d=o(\log n/\log\log n)$ and assume the hypotheses of Theorem~\ref{thm:stochastic}. For every fixed query, or a random query independent of the marked input,
\begin{equation}
 \E T_{\rm tot}=O\bigl(\exp(\tfrac12 d\log d+O(d))\log n\bigr).
 \label{eq:cost-main}
\end{equation}
\end{maintheorem}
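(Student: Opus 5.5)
We split according to whether the good event $\cH_n$ holds, writing $\E T_{\rm tot}=\E[T_{\rm tot}\1_{\cH_n}]+\E[T_{\rm tot}\1_{\cH_n^c}]$.

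\emph{Bad event.} Greedy search strictly decreases $d(v,q)$ at every hop and never revisits a vertex within a layer. Hence $T_\ell\le|X_\ell|$ and $T_{\rm tot}\le N_{\rm tot}\le (L+1)|X_0|$. By Cauchy--Schwarz,
\[
\E[T_{\rm tot}\1_{\cH_n^c}]\le (L+1)\bigl(\E|X_0|^2\bigr)^{1/2}\PP(\cH_n^c)^{1/2}.
\]
For all three models the second factorial moment is at most $(bn)^2$: for the DPP use Hadamard's inequality, $\varphi^{(2)}\le \cK_n(x,x)\cK_n(y,y)$. Therefore $\E|X_0|^2=O(n^2)$. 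Theorem~\ref{thm:stochastic} with $M>2$ gives $\PP(\cH_n^c)^{1/2}=O(n^{-M/2})=o(n^{-1})$, so this term is $O(\log n\cdot n\cdot n^{-M/2})=o(1)$.

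\emph{Good event, per-layer bound.} On $\cH_n$, Theorem~\ref{thm:deterministic} applies. The entry point of layer $\ell<L$ is $W_{\ell+1}$, so the starting distance is at most $Bh_{\ell+1}=\rho^{-1/d}Bh_\ell$. Every visited vertex $v$ then lies in the ball $B(q,\rho^{-1/d}Bh_\ell)$. Since the distances to $q$ strictly decrease along the path, the vertices are distinct, and
\[
T_\ell\le |X_\ell\cap B(q,\rho^{-1/d}Bh_\ell)|.
\]
At the top layer $L$ we use the crude bound $T_L\le |X_L|$, whose expectation is at most $b\,n_L=O(D\log n)$.

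This crude count is too weak on its own: summed over the $L\sim\log n$ layers, a logarithmic-size ball count per layer would give $\log^2 n$. We must instead show that a typical layer costs $O(1)$ hops in expectation, with a constant of order $D/\log n$ times $\log n$ overall. The natural sharper route is a \emph{progress} lemma. On $\mathfrak C_n$, when $d(v,q)$ is large compared with the nearest-neighbor scale $n_\ell^{-1/d}$, the sector containing the direction to $q$ holds a point within $r_{\rm cov}h_\ell$, and its nearest point $u$ satisfies
\[
d(u,q)\le d(v,q)-c\,\min\{d(v,q),\,\text{dist to that nearest point}\}.
\]
Rather than use this deterministic progress, we count expected hops in layer $\ell$ by the number of layer-$\ell$ points strictly closer to $q$ than the entry point $W_{\ell+1}$. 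The key observation is that $W_{\ell+1}$ is within $(1+\eps)R_{\ell+1}(q)$ of $q$ (Theorem~\ref{thm:deterministic}). Layer $\ell+1$ is an independent $\rho$-thinning of layer $\ell$, so $R_{\ell+1}(q)$ is comparable to the distance to the $\mathrm{Geom}(1-\rho)$-th nearest layer-$\ell$ point. The number of layer-$\ell$ points in $B(q,(1+\eps)R_{\ell+1}(q))$ then has expectation bounded by
\[
\frac{b}{a}\,(1+\eps)^d\cdot O(1/(1-\rho)),
\]
using the intensity bounds $a\le$ density $\le b$. These bounds come from the joint intensities: Palm/Campbell estimates for Poisson and Cox processes, and negative association for the DPP. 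Summing $O(1)$-type terms over $L+1=O(\log n)$ layers yields $O(C_d\log n)$.

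\emph{Main obstacle.} The hard step is controlling this ball-count expectation uniformly in $d$ while conditioning on $\cH_n$. The factor $(1+\eps)^d$ and the density ratio $(b/a)$ must be absorbed into $\exp(\tfrac12 d\log d+O(d))$. The constant $\tfrac12 d\log d$ enters through $D$ and through volume ratios of the form $r_{\rm cov}^d c_d$ with $r_{\rm cov}=O(\sqrt d)$, and $c_d(\sqrt d)^d=\exp(\tfrac12 d\log d+O(d))$ by Stirling. If the thinning comparison proves too delicate, the fallback is the cruder bound $T_\ell\le |X_\ell\cap B(q,\rho^{-1/d}Bh_\ell)|$. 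Its expectation is $\le b\,c_d(\rho^{-1/d}B)^d\log n=\exp(\tfrac12 d\log d+O(d))\log n$. This gives the stated rate only when combined with a progress argument showing that the search on lower layers does not need the full logarithmic count. Concretely, we would replace $h_\ell$ by the $\log$-free scale $n_\ell^{-1/d}$ for the entry distance, using the thinning structure to bound $\E[R_{\ell+1}(q)^d]\,n_\ell=O(1)$.
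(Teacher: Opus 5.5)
Your skeleton matches the paper's. The top layer is charged $\E|X_L|\le (bD/\rho)\log n$, the bad event is handled by Cauchy--Schwarz against $\PP(\cH_n^c)=O(n^{-M})$ with $M>2$, and on $\cH_n$ each lower layer satisfies $T_\ell\le\cN_\ell\bigl(B(q,C_{\rm ann}R_{\ell+1}(q))\bigr)$.

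The gap is the step that carries the theorem, $\sup_{\ell<L}\E[T_\ell\1_{\cH_n}]=O(C_{\rm ann}^d)$. You assert it from a heuristic and then list it as the unresolved ``main obstacle''. The thinning fact you invoke is true but too weak. Conditionally on $X_\ell$, the nearest layer-$(\ell+1)$ point is the $G$-th nearest layer-$\ell$ point, with $G$ geometric of mean $1/\rho$ (not $1/(1-\rho)$). That controls only the count in $B(q,R_{\ell+1}(q))$ itself. For the enlarged ball $B(q,C_{\rm ann}R_{\ell+1}(q))$, the radius is a function of the same configuration being counted. So you cannot get its expected count by inserting $\E R_{\ell+1}(q)^d$ into the mean-measure bound $bc_d(C_{\rm ann}r)^dn_\ell$. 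Your fallback via $\E[R_{\ell+1}(q)^d]\,n_\ell=O(1)$ has exactly the same defect. Palm/Campbell identities and negative association do not address this, and they are not needed, since $q$ is deterministic.

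The missing idea is to replace the random radius by a deterministic one at each tail level $k$. Choose $s_k$ with $bn_\ell c_d(C_{\rm ann}s_k)^d=k/2$.
\begin{itemize}
\item If $s_k\le Bh_{\ell+1}$, then on $\cH_n$ you have $\{T_\ell\ge k\}\subseteq\{R_{\ell+1}(q)>s_k\}\cup\{\cN_\ell(B(q,C_{\rm ann}s_k))\ge k\}$. The first event is an empty-ball event for layer $\ell+1$, with probability at most $e^{-an_{\ell+1}c_ds_k^d}=e^{-\eta k}$, where $\eta=a\rho/(2bC_{\rm ann}^d)$. The second has probability at most $e^{-(\log2-1/2)k}$ by Lemma~\ref{lem:counts}. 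Both bounds use only \eqref{eq:palmcrit1} for fixed balls around $q$.
\item If $s_k>Bh_{\ell+1}$, use instead $T_\ell\le\cN_\ell(B(q,C_{\rm ann}Bh_{\ell+1}))$, whose mean is below $k/2$. This way the empty-ball estimate is never applied to a ball that wraps around the torus.
\end{itemize}
The resulting tail $2e^{-\gamma k}$ sums to $O(C_{\rm ann}^d)=e^{O(d)}$ per layer, and $L=O(\log n)$ layers give $e^{O(d)}\log n$.

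A smaller correction: Stirling gives $c_d(\sqrt d)^d=e^{O(d)}$, not $\exp(\tfrac12 d\log d+O(d))$. So the factor $\exp(\tfrac12 d\log d)$ does not come from volume ratios on the lower layers. It enters only through the factor $c_d^{-1}$ in $r_{\rm cov}^d$, hence in $D$ and the top-layer term $\E|X_L|$.
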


For fixed dimension, Theorem~\ref{thm:complexity} yields a logarithmic expected hop count. When the dimension is allowed to grow slowly with \(n\), the dependence is captured by the explicit factor \(\exp(\tfrac12 d\log d + O(d))\), which is $n^{o(1)}$ in the stated dimension regime.

\section{Strategy of the proofs}
\label{sec:roadmap} 
The following three items explain the main ideas behind the proofs.

\paragraph{Terminal vertices must be close.} Under the conditions of Definition~\ref{def:class}, if a vertex is far from the query, then one of its out-neighbors points approximately toward the query direction. A cosine-rule calculation (Lemma~\ref{lem:progress}) then shows that this neighbor strictly decreases the query distance. Thus, when greedy search stops, the terminal vertex cannot be far from the query.
 We use this argument to finally show that sufficiently narrow sectors yield a $(1+\varepsilon)$ approximation guarantee.

\paragraph{Search remains close to the query.} The $(1+\varepsilon)$ approximation~\eqref{eq:main-ann} shows that the search remains within a small neighborhood of the query on every non-top layer. Since each greedy step visits a new vertex, the number of hops is bounded by the number of points in that neighborhood. Theorem~\ref{thm:complexity} therefore reduces to bounding these point counts (Proposition~\ref{prop:layer-cost}).

\paragraph{Usable edge in every direction.} The deterministic theorem assumes that every sector $\mathcal{C}_j(v)$ has non-empty intersection with the candidate pool. To prove that this event occurs with high probability for the random point processes, we show that there exists a small ball containing at least one point in every sector and at most $m$ points with high probability. The probabilistic analysis then shows that this situation occurs simultaneously for all vertices and layers with high probability (Proposition~\ref{prop:coverage}).

\section{Numerical experiments}
\label{sec:experiments}

We experimentally evaluate two different questions: how close the returned point is to the
true nearest neighbor and how efficient the greedy route is compared to an exact shortest path. 
The first experiment uses the $d$-dimensional torus with varying $d$, the second one the unit two-dimensional torus. 
We use the following random processes:
\begin{enumerate}
\item A homogeneous Poisson point process of intensity $n$.
\item A DPP obtained by dividing the torus into $n$ equal-area
rectangles $C_1,\ldots,C_n$ and independently drawing one uniform point in each rectangle.
Its kernel is
\begin{equation*}
\mathcal K_n(x,y)=n\sum_{i=1}^n\1_{C_i}(x)\1_{C_i}(y).
\end{equation*} 
\item A Cox process with independent uniform phases
$\Phi_1,\Phi_2\in[0,2\pi)$ and conditional intensity $nF(x)$, where
\begin{equation*}
 F(x)=1+0.75\cos(2\pi x_1+\Phi_1)\cos(2\pi x_2+\Phi_2).
\end{equation*}
\end{enumerate}

\paragraph{Impact of dimension.}
To investigate the effect of the underlying dimension, we compare $K=16$ and $K=64$ sectors on the same
points and queries. Figure~\ref{fig:dimension-accuracy} shows that increasing the dimension while keeping the number of angular sectors $ K$ fixed gives a worse approximation ratio. For $K=16$ the mean approximation ratio remains within a 10\% error up to dimension 6, while for $K=64$, it remains within that range up to dimension 12. The \emph{recall}, the fraction of times the algorithm succeeds in finding the exact nearest neighbor, also decreases as the dimension increases. This can be mitigated by increasing $K$, at the cost of higher computational effort. 

\begin{figure}[t]
\centering
\includegraphics[width=\linewidth]{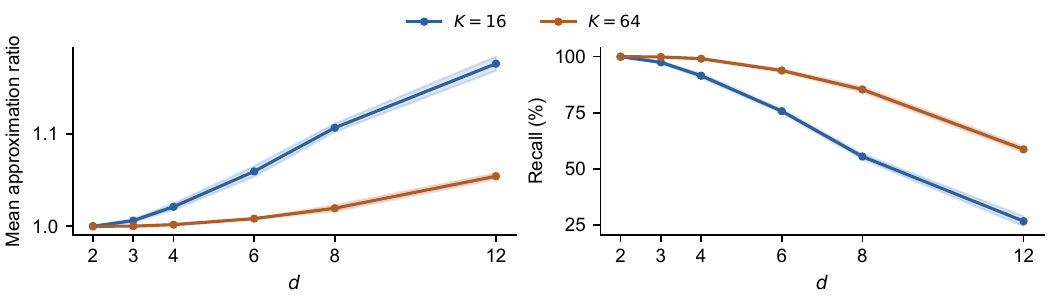}
\caption{Poisson input at intensity $20{,}000$: approximation ratio and exact
nearest-neighbor recall as dimension increases. Each plotted point shows the mean over five independent realizations, with $1{,}000$ queries per realization. The shaded areas form the 99\% intervals.}
\Description{For Poisson input in dimensions 2 through 12, mean approximation ratios rise and exact-neighbor recall falls. Sixty-four directions give better accuracy and require more distance evaluations than sixteen directions.}
\label{fig:dimension-accuracy}
\end{figure}

\paragraph{Shortest paths}
While our results show that the algorithm has a $(1+\varepsilon)$ approximation guarantee for nearest-neighbor search, we now evaluate how efficient the path the algorithm takes is compared to the shortest route. We therefore define the \emph{route stretch} as the hop count of the route returned by the algorithm divided by the hop count of the shortest route toward that target. 
Figure~\ref{fig:stretch_coverage} shows that the route stretch is typically at most 1.5, with some exceptions. Thus, while the algorithm approximately finds the closest point, the route it takes may be substantially longer than optimal in some cases. This is because in some cases, the exact shortest path
descends to a lower layer before reaching a greedy local minimum, whereas Algorithm~\ref{alg:query}
continues on its current layer as long as an improving edge is available.

\begin{figure}
    \centering
    \includegraphics[width=\linewidth]{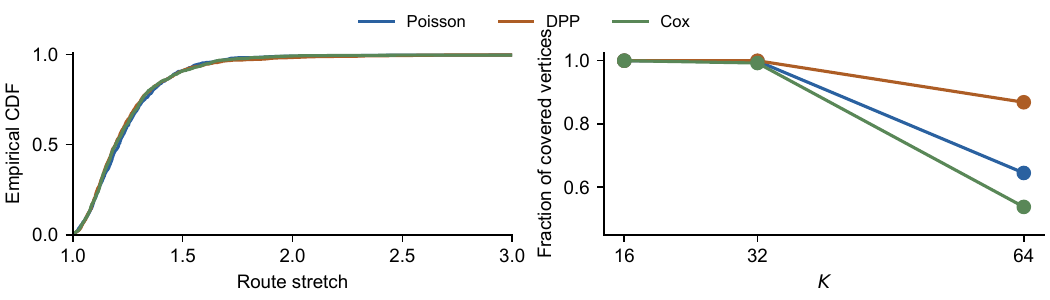}
    \caption{
    Left: route-stretch distributions at $K=32$. 
Right: fraction of vertices satisfying Condition~\ref{def:class}(2) against $K$ at $n=20{,}000$}
    \label{fig:stretch_coverage}
    \Description{Two figures. The left one plots the route stretch against the empirical CDF while the right one plots K against the fraction of covered vertices. Both figures compare the three input classes.}
\end{figure}

The right panel of Figure~\ref{fig:stretch_coverage} checks Condition~\ref{def:class}(2) and what fraction of vertices satisfy this condition for the families of stochastic point processes. While our results ensure that this fraction tends to one as $n\to\infty$, the figure shows that for finite $n$, this fraction drops quickly, especially for large $K$. Interestingly, how quickly this descends depends on the process types as well. For the DPP, which is more regular, 90\% of the vertices still satisfy the condition when $K=64$, whereas for the more clustered Cox process, only 50\% of the vertices still satisfy it.

\section{Proofs of Theorems~\ref{thm:deterministic} and~\ref{thm:stochastic}}
\label{sec:accuracy-proof}

For the proof of the deterministic result (Theorem~\ref{thm:deterministic}) in Section~\ref{sec:determ_proof}, we know that every candidate pool contains a representative from every sector. Under this assumption, we show that greedy search always has an improving move when it is far from the query. We then show that a terminal vertex must be close to the true nearest neighbor, yielding the deterministic $(1+\varepsilon)$ approximation. 

Section~\ref{subsec:criterion} then shows that the processes of Definition~\ref{def:inputs} satisfy the criteria for Theorem~\ref{thm:deterministic} with high probability. This then yields the accuracy guarantees of Theorem~\ref{thm:stochastic}.

\subsection{Directional coverage forces relative accuracy}\label{sec:determ_proof}

\begin{figure}[t]
\centering
\begin{tikzpicture}[
    scale=1.08,
    >=Stealth,
    every node/.style={font=\scriptsize}
]
\coordinate (v) at (0,0);
\coordinate (q) at (4.4,0);
\coordinate (u) at (1.35,0.55);
\fill[hsnwGray!10] (v) circle (1.85);
\draw[hsnwGray!55] (v) -- (q);
\draw[hsnwBlue,very thick,->] (v) -- (u);
\draw[hsnwRed,thick,dashed,->] (u) -- (q);
\draw[hsnwGray!60,dashed] (v) -- (1.8,0.75);
\draw[hsnwRed,thick] (0.55,0) arc[start angle=0,end angle=22,radius=0.55];

\fill[hsnwBlue] (v) circle (2pt) node[below] {$v$};
\fill[hsnwBlue] (u) circle (2pt) node[above] {$u$};
\fill[hsnwRed] (q) circle (2pt) node[below] {$q$};
\node[below] at (2.2,0) {$d(v,q)$};
\node[above left] at (0.75,0.3) {$d(v,u)$};
\node[hsnwRed] at (0.98,0.18) {$\phi\le\omega$};
\node[hsnwGray] at (-1.1,1.55) {$B(v,r_{\rm cov}h_\ell)$};
\end{tikzpicture}
\caption{From the current vertex $v$, the neighbor $u \in N_\ell^+(v)$ lies in the sector containing the direction toward the query $q$, so the angle between these directions satisfies
$\phi\le\omega$. Since the hierarchy belongs to $\mathfrak C_n$, the length of the edge $d(v,u)$ is at most $r_{\rm cov}h_\ell$.
Lemma~\ref{lem:progress} then shows that $u$ is closer to $q$ than
$v$ whenever $d(v,u)<2d(v,q)\cos\phi$.}
\label{fig:forward-progress}
\Description{A short edge in the query sector yields a decrease in distance by the cosine rule.}
\end{figure}

This subsection begins with Proposition~\ref{prop:geometry} which describes the core geometric property of the hierarchy model by giving upper bounds for the distance between the terminal vertex of a layer and the query.
For that, define the approximation factor
\begin{equation}\label{eq:cann}
    C_{\rm ann}:= (1-2\sin(\omega/2))^{-1}.
\end{equation}
Later, we will choose $\omega_\varepsilon$ so that $C_{\rm ann}< 1+\varepsilon$. We recall that $R_\ell(q)=\min_{x\in X_\ell}d(x,q)$.
After stating the proposition, we prove two lemmas quantifying the progress towards the query which we need for the proof of the proposition. We conclude with the proof of Theorem~\ref{thm:deterministic}.

\begin{proposition}[Distance of terminal vertex]
\label{prop:geometry}
Suppose $\omega<\pi/3$, $B>r_{\rm cov}/(2\cos\omega)$ and $Bh_L<1/4$.
For every hierarchy $(X_0,\ldots,X_L) \in \mathfrak C_n$, the following bounds hold simultaneously for every query $q\in \T^d$, every nonempty layer $\ell$, every starting vertex in that layer and every layerwise terminal vertex $W_\ell$:
\[
 d(W_\ell,q)\le Bh_\ell,
 \qquad d(W_\ell,q)\le C_{\rm ann}R_\ell(q).
\]
\end{proposition}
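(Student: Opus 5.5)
The plan is to argue by contradiction at a terminal vertex $W=W_\ell$. By definition of the stopping rule, every $u\in N_\ell^+(W)$ satisfies $d(u,q)\ge d(W,q)$. Two facts from Definition~\ref{def:class} drive the argument. By item~2, every sector $\cC_j(W)$ meets $P_\ell(W)$, so $W$ has an out-neighbour in every sector. By item~1, the pool $P_\ell(W)$, which consists of the $m$ nearest points, is contained in $B(W,r_{\rm cov}h_\ell)$, so every arc has length at most $r_{\rm cov}h_\ell$. All estimates will be carried out in the lift $\R^d$ around $W$. The torus is handled by one observation: for any $u$, the vector $\delta(q,W)-\delta(u,W)$ is a lift of $q-u$, so
\[
d(u,q)\le\|\delta(q,W)-\delta(u,W)\|_2 .
\]
Hence Euclidean upper bounds on distances in the lift transfer to the torus without any restriction on the size of $d(W,q)$.

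\emph{Step 1: the bound $d(W,q)\le Bh_\ell$.} Write $a=d(W,q)$ and suppose $a>Bh_\ell$. Let $j$ be the sector containing $\delta(q,W)/a$, and let $u\in N_\ell^+(W)\cap\cC_j(W)$, which exists by item~2. Put $s=d(W,u)\le r_{\rm cov}h_\ell$. The angle $\phi$ between $\delta(u,W)$ and $\delta(q,W)$ satisfies $\phi\le\omega<\pi/3$. This is where I will state Lemma~\ref{lem:progress}: by the cosine rule in the lift,
\[
\|\delta(q,W)-\delta(u,W)\|^2=a^2+s^2-2as\cos\phi ,
\]
which is less than $a^2$ whenever $s<2a\cos\phi$. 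Here
\[
s\le r_{\rm cov}h_\ell<2Bh_\ell\cos\omega<2a\cos\phi,
\]
using the hypothesis $B>r_{\rm cov}/(2\cos\omega)$. Therefore $d(u,q)<a$, contradicting terminality. Note also that $h_\ell\le h_L$, so $a\le Bh_\ell<1/4$ from now on.

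\emph{Step 2: the relative bound $a\le C_{\rm ann}R_\ell(q)$.} Let $x_*\in X_\ell$ attain $R=R_\ell(q)$, and assume $x_*\ne W$ (otherwise the bound is trivial). Set $t=d(W,x_*)\le a+R\le 2a<1/2$. This bound is what makes the lifts $\delta(x_*,W)$, $\delta(q,W)$ and $\delta(q,x_*)$ mutually consistent, so that $W$, $q$, $x_*$ can be treated as a genuine Euclidean triangle.

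Take the sector $j$ containing the direction of $\delta(x_*,W)$ and its representative $u$, with $s=d(W,u)$. I claim $s\le t$. If $x_*\in P_\ell(W)$, this holds because $u$ is nearest in $P_\ell(W)\cap\cC_j(W)$. If $x_*\notin P_\ell(W)$, every pool point is at distance at most $t$ from $W$.

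Now compare $u$ with the point $x'=W+\lambda\,\delta(x_*,W)$, where $\lambda=s/t\in(0,1]$. Both $u$ and $x'$ lie at distance $s$ from $W$ with angle at most $\omega$ between them, so $\|u-x'\|\le 2s\sin(\omega/2)$. By convexity of the norm, $\|x'-q\|\le(1-\lambda)a+\lambda R$. Terminality gives $a\le d(u,q)$, so
\[
a\le 2s\sin(\omega/2)+(1-\lambda)a+\lambda R .
\]
Dividing by $\lambda$ yields
\[
a\le R+2t\sin(\omega/2).
\]

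\emph{Step 3: closing the constant.} This is the step I expect to be the main obstacle. Inserting the crude bound $t\le a+R$ only gives
\[
a\le\frac{1+2\sin(\omega/2)}{1-2\sin(\omega/2)}\,R ,
\]
which is still of the form $1+O(\omega)$ and therefore suffices for Theorem~\ref{thm:deterministic}, but it is not the stated $C_{\rm ann}$. To obtain exactly $C_{\rm ann}=(1-2\sin(\omega/2))^{-1}$, I would aim for the sharper inequality $a\le R+2a\sin(\omega/2)$. The idea is to replace the chordal estimate by one measured from $q$'s side. I would first try comparing $u$ with the point at distance $s$ along the ray from $W$ toward $x_*$ and using that the angle $\angle(\delta(q,W),\delta(x_*,W))$ is controlled, since $x_*$ is closer to $q$ than $W$. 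Failing that, I would combine the Step~1 representative (in the direction of $q$, which forces $s\ge 2a\cos\phi$) with the representative in the direction of $x_*$.

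If neither route closes, I would prove the proposition with the slightly weaker constant above. The downstream choice of $\omega_\varepsilon$ is unaffected, since that constant also tends to $1$ as $\omega\to0$.
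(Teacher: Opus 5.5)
Your Step~1 is correct and coincides with the paper's argument. So is your handling of the lift in Step~2: a norm below $1/2$ forces $\delta(q,W)-\delta(q,x_*)=\delta(x_*,W)$. The gap is that the proposal never proves the stated constant. Your Step~2 inequality $a\le R+2t\sin(\omega/2)$ is valid. However, $t=d(W,x_*)$ is controlled only by $t\le a+R$, so it yields $a\le \frac{1+2\sin(\omega/2)}{1-2\sin(\omega/2)}R$. This is strictly weaker than $a\le C_{\rm ann}R$, and the two repair routes in Step~3 are only sketched. The loss is built into your method. You bound $d(u,q)$ through the chord $\|\delta(u,W)-\lambda\delta(x_*,W)\|\le 2s\sin(\omega/2)$, and after dividing by $\lambda=s/t$ the angular error is charged at scale $t$ rather than $a$. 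Since $t$ can be close to $a+R$, that inequality cannot give more. The weaker constant would still give Theorem~\ref{thm:deterministic} after shrinking $\omega_\eps$. It does not give the proposition as stated, nor the explicit $\omega_\eps=2\arcsin(\eps/(2(1+\eps)))$ used there.

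The missing idea is the paper's cone-descent lemma: use the triangle inequality once, on the side of $q$, before using any information about $u$. Write $x=\delta(q,W)$, $y=\delta(x_*,W)$ and $\psi=\angle(x,y)$. Let $w$ be the vector of length $a$, in a plane containing $x$ and $y$, that makes angle $\omega$ with $x$ on the side opposite to $y$. Then $\|x-w\|=2a\sin(\omega/2)$ and $\|y-w\|^2=t^2+a^2-2at\cos(\psi+\omega)$. Suppose $a>C_{\rm ann}R$, that is, $R<a(1-2\sin(\omega/2))$. Then
\[
\|y-w\|\le\|y-x\|+\|x-w\|=R+2a\sin(\omega/2)<a .
\]
This is equivalent to $t<2a\cos(\psi+\omega)$. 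Because $\psi+\omega<4\pi/3$, it forces $\psi+\omega<\pi/2$. Now take the representative $u$ in the sector of $y$. It satisfies $s\le t$ by your own argument, and $\phi=\angle(x,\delta(u,W))\le\psi+\omega$. Hence $s\le t<2a\cos(\psi+\omega)\le 2a\cos\phi$, and Lemma~\ref{lem:progress} gives $d(u,q)<a$. This contradicts terminality, so $a\le C_{\rm ann}R$ with exactly the stated constant.
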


The following lemma states that a sufficiently short step pointing toward the query strictly decreases the distance to the query, see Figure~\ref{fig:forward-progress}. Recall that $\delta(x,y)$ is the canonical lift on the torus. 

\begin{lemma}[Progress]
\label{lem:progress}
Let $u,v,q\in\T^d$ be such that $\delta(q,v)$ and $\delta(u,v)$ are nonzero. \\
Let $\phi := \angle (\delta(q,v),\delta(u,v))$.
If $d(v,u)<2d(v,q)\cos\phi$, then 
\[
    d(u,q)<d(v,q).
\]
\end{lemma}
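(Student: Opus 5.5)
The plan is to pass to Euclidean space through the canonical lifts and then apply the cosine rule. Set $a:=\delta(q,v)$ and $b:=\delta(u,v)$, both nonzero vectors in $[-1/2,1/2)^d$, with $\phi=\angle(a,b)$.

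\textbf{Step 1: identify the norms with torus distances.} Each coordinate of the canonical lift is the representative of smallest absolute value, so $\delta(x,y)$ is a minimal-norm representative of $x-y$ in $\R^d$. The same holds for $-\delta(x,y)$ as a representative of $y-x$. Hence $d$ is symmetric, and $\|a\|=d(q,v)=d(v,q)$ and $\|b\|=d(u,v)=d(v,u)$.

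\textbf{Step 2: bound $d(u,q)$ by a Euclidean distance.} The vector $a-b$ projects to $(q-v)-(u-v)=q-u$ in $\T^d$, so it is some lift of $q-u$. Because $d(u,q)$ is the minimum Euclidean norm over all lifts of $q-u$ (the canonical lift minimises coordinatewise), we get $d(u,q)\le\|a-b\|$. This is the one point that needs care on the torus: the difference of the two canonical lifts need not be canonical, and we use only that it is \emph{a} lift, hence has norm at least the minimum.

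\textbf{Step 3: apply the cosine rule.} We compute
\[
\|a-b\|^2=\|a\|^2+\|b\|^2-2\|a\|\|b\|\cos\phi=\|a\|^2+\|b\|\bigl(\|b\|-2\|a\|\cos\phi\bigr).
\]
By hypothesis $\|b\|>0$ and $\|b\|<2\|a\|\cos\phi$, so the last term is strictly negative. Therefore $\|a-b\|<\|a\|$.

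Combining the three steps gives $d(u,q)\le\|a-b\|<\|a\|=d(v,q)$. The hypothesis $d(v,u)<2d(v,q)\cos\phi$ forces $\cos\phi>0$, so obtuse angles are excluded automatically and need no separate treatment.
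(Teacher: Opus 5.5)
Your proposal is correct and follows essentially the same route as the paper: bound $d(u,q)$ by $\|\delta(q,v)-\delta(u,v)\|$, using that this difference is some lift of $q-u$, and then apply the cosine rule together with the hypothesis $d(v,u)<2d(v,q)\cos\phi$. You also make explicit two points the paper leaves implicit: the symmetry of $d$, and the fact that $d(v,u)>0$ is what makes the final inequality strict.
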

\begin{proof}
Since $\delta(q,v)-\delta(u,v)$ is a lift of $q-u$, the definition of the torus distance gives
\[
d(u,q)
\le
\|\delta(q,v)-\delta(u,v)\|.
\]
By squaring both sides of the previous inequality we obtain
\[
d(u,q)^2
\le
d(v,q)^2+d(v,u)^2-2d(v,q)d(v,u)\cos\phi.
\]
By using
\[
d(v,u)<2d(v,q)\cos\phi,
\]
we get
\[
d(u,q)^2
\le
d(v,q)^2+d(v,u)^2 -2d(v,q)d(v,u)\cos\phi<d(v,q)^2
\]
and hence
$d(u,q)<d(v,q)$.
\end{proof}

The next lemma provides conditions in~\eqref{eq:cone descent assumptions} under which a step toward a point near the query still makes progress. For an illustration, see Figure~\ref{fig:cone-descent}.

\begin{lemma}[Cone descent]
\label{lem:cone}
Let $x,y,z\in\mathbb R^d \setminus \{\mathbf{0}\}$. Suppose
\begin{align}\label{eq:cone descent assumptions}
(i)~  \|x-y\|<C_{\rm ann}^{-1}\|x\|,
\qquad
(ii)~  0<\|z\|\le\|y\|,
\qquad
(iii)~ \angle(y,z)\le\omega.
\end{align}
Then, $\|x-z\|<\|x\|$.
\end{lemma}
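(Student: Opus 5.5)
The plan is to reduce to a comparison along the ray through $y$: rescale $y$ to the length of $z$, use convexity of the norm along the segment $[0,y]$, and control the angular deviation between $z$ and that rescaled vector by the chord length $2\|z\|\sin(\theta/2)$.

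\emph{Step 1 (chord bound).} Set $s:=\|z\|/\|y\|\in(0,1]$, which is valid by (ii). Let $y':=sy$, so that $\|y'\|=\|z\|$. Put $\theta:=\angle(y,z)=\angle(y',z)\le\omega$. Since $y'$ and $z$ have the same norm, the chord formula gives
\[
\|z-y'\|=2\|z\|\sin(\theta/2)\le 2s\|y\|\sin(\omega/2).
\]

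\emph{Step 2 (convexity along the segment).} Write $x-y'=(1-s)x+s(x-y)$. The triangle inequality then gives
\[
\|x-y'\|\le(1-s)\|x\|+s\|x-y\|.
\]
Combining this with Step 1 and with $\|y\|\le\|x\|+\|x-y\|$, and abbreviating $\sigma:=\sin(\omega/2)$, we get
\[
\|x-z\|\le(1-s)\|x\|+s\Bigl[(1+2\sigma)\|x-y\|+2\sigma\|x\|\Bigr].
\]

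\emph{Step 3 (bracket estimate).} Since $s>0$, it suffices to show that the bracket is strictly less than $\|x\|$. Because $\omega<\pi/3$, we have $\sigma<1/2$, so $C_{\rm ann}^{-1}=1-2\sigma>0$. Hypothesis (i) then gives
\[
(1+2\sigma)\|x-y\|<(1+2\sigma)(1-2\sigma)\|x\|\le(1-2\sigma)\|x\|,
\]
where the last inequality uses $1+2\sigma\ge1$ and $1-2\sigma>0$. Adding $2\sigma\|x\|$ to both sides shows the bracket is $<\|x\|$. Hence $\|x-z\|<(1-s)\|x\|+s\|x\|=\|x\|$.

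The only delicate point is the case where $z$ is much shorter than $y$. A naive bound $\|x-z\|\le\|x-y\|+\|y-z\|$ fails there, because $\|y-z\|$ can be close to $\|y\|$. The rescaling $y'=sy$ together with the convex-combination bound in Step 2 is what handles this case. I would also check that strictness survives the boundary case $s=1$: it does, because the inequality in Step 3 is strict and is multiplied by $s>0$.
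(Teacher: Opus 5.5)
\textbf{Gap in Step 3.} The inequality $(1+2\sigma)(1-2\sigma)\|x\|\le(1-2\sigma)\|x\|$ is reversed. Since $1+2\sigma\ge1$ and $1-2\sigma>0$, you in fact have $(1+2\sigma)(1-2\sigma)=1-4\sigma^2\ge 1-2\sigma$, with equality only when $\sigma=0$. Your bracket estimate therefore needs $\|x-y\|<\frac{1-2\sigma}{1+2\sigma}\|x\|$, which is strictly stronger than (i) whenever $\omega>0$.

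This is not a slip in the last line that can be patched; the chain from Steps 1--2 is already too weak. Take $\|x\|=1$, $y=(1+\delta)x$ with $\delta$ just below $1-2\sigma$, and $z$ with $\|z\|=\|y\|$ (so $s=1$) and $\angle(y,z)=\omega$. All three hypotheses hold. Your bound gives $\|x-z\|\le\delta+2\sigma(1+\delta)$, which tends to $1+2\sigma(1-2\sigma)>1$. In fact $\|x-z\|^2=1+(1+\delta)^2-2(1+\delta)\cos\omega<1$, because $1+\delta<2-2\sigma\le 2-4\sigma^2=2\cos\omega$.

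The loss comes from putting the chord on $y$. Since $\|y\|$ can be nearly $(2-2\sigma)\|x\|$, the rotation costs $2\sigma\|y\|$. The constant $C_{\rm ann}^{-1}=1-2\sin(\omega/2)$ is calibrated exactly for a chord of length $2\sin(\omega/2)\|x\|$.

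\textbf{The missing idea: rotate $x$, not $y$.} This is what the paper does. With $\|x\|=1$, it introduces the auxiliary unit vector $w$, obtained by rotating $x$ by $\omega$ away from $y$, so that $\|x'-w\|=2\sin(\omega/2)=1-C_{\rm ann}^{-1}$. The triangle inequality then gives $\|y'-w\|<1$, which unpacks to $\|y\|<2\cos(\psi+\omega)$. Monotonicity of $\phi\mapsto 2\cos\phi$, together with $\angle(x,z)\le\psi+\omega$ and $\|z\|\le\|y\|$, finishes the proof.

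You can keep your convexity idea if you reorder the steps. Let $R$ be the rotation by $\theta$ in a plane containing $y$ and $z$ with $R(y/\|y\|)=z/\|z\|$. Then
\[
\|x-Ry\|=\|R^{-1}x-y\|\le\|R^{-1}x-x\|+\|x-y\|<2\sigma\|x\|+(1-2\sigma)\|x\|=\|x\|.
\]
Since $z=sRy$, convexity gives $\|x-z\|\le(1-s)\|x\|+s\|x-Ry\|<\|x\|$.
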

\begin{proof}
By scaling all three vectors by $\|x\|^{-1}$ and applying an orthogonal transformation, we may assume that $x$ is the first standard basis vector. Define $\psi := \angle(x,y)$ and $\phi := \angle(x,z)$. In an auxiliary plane, set
$x'=(1,0)$, $y'=\|y\|(\cos\psi,\sin\psi)$ and
$w=(\cos\omega,-\sin\omega)$. By construction, we have 
\begin{align*}
\angle(x',y') = \psi, \quad \|y-x\| = \|y' - x'\| \quad \text{ and } \quad \|x'-w \| = 2\sin(\frac{\omega}{2}) = 1- C_{\rm ann}^{-1}.
\end{align*}
Combining this with assumption $(i)$, we obtain
\[
 \|y'-w\| \le\|y'-x'\|+\|x'-w\| = \|y-x\|+1-C_{\rm ann}^{-1}< 1.
\]
This implies $$ \|y'-w\|^2 = \| y \|^2 +1 -2\| y \|\cos(\psi + \omega)< 1$$ which yields $$\|y\|<2\cos(\psi+\omega).$$ Since $\psi \le \pi$ and $\omega < \frac{\pi}{3}$, we have $0 \le \psi+\omega<\frac{4\pi}{3}$. Thus, $\cos(\psi+\omega)> 0$ implies $\psi+\omega <\frac{\pi}{2}$.
Applying assumption $(iii)$, we compute $\phi=\angle(x,z)\le \angle(x,y) + \angle(y,z) \le \psi+\omega$.
By assumption~$(ii)$ and the monotonicity of the cosine, we get
$$0< \|z\|\le \|y\|<2\cos(\psi+\omega)\le 2\cos\phi.$$ Therefore
$$ \|x-z\|^2=1+\|z\|^2-2\|z\|\cos\phi = 1+\|z\|(\|z\| - 2\cos\phi)<1,$$ which implies the statement.
\end{proof}

\begin{figure}[t]
\centering
\begin{tikzpicture}[
    scale=0.7,
    >=Stealth,
    every node/.style={font=\scriptsize}
]
\coordinate (W) at (0,0);\coordinate (q) at (2.8,0);
\coordinate (x) at (3.15,.42);\coordinate (u) at (1.95,.49);
\fill[hsnwBlue!7] (q) circle (2.8); 
\fill[hsnwRed!6] (W)--(18:3.8) arc (18:3:3.8)--cycle; 
\begin{scope} \clip (q) circle (2.8); \fill[purple!25] (W)--(18:3.8) arc (18:3:3.8)--cycle; \end{scope} 
\draw[hsnwGray] (W)--(q); 
\draw[hsnwGray,dashed] (W)--(x); 
\draw[hsnwBlue,very thick,->] (W)--(u); \draw[hsnwRed,thick] (q)--(x); \fill (W) circle (1.7pt) node[below left] {$W_\ell$}; 
\fill[hsnwRed] (q) circle (1.7pt) node[below] {$q$}; \fill (x) circle (1.7pt) node[above right] {$x_\ell$}; 
\fill[hsnwBlue] (u) circle (1.7pt) node[above] {$u$}; \node[below] at (1.25,0) {$d(W_\ell,q)$}; 
\node[hsnwBlue,align=center] at (2.8,2.2) {points closer to $q$\\than $W_\ell$}; 
\end{tikzpicture}
\caption{If the true nearest point $x_\ell$ were substantially closer to $q$ than the terminal vertex $W_\ell$, then the neighbor $u$ of $W_\ell$ in the same sector must also be closer to $q$ than $W_\ell$ (red sector). Such a neighbor would be an improving greedy move, so $W_\ell$ cannot be terminal. Lemma~\ref{lem:cone} makes this statement quantitative.}
\Description{A possible terminal point lies on a ball centered at the query. A nearest point close to the query forces the out-neighbor in its sector inside that ball.}
\label{fig:cone-descent}
\end{figure}

\begin{proof}[Proof of Proposition~\ref{prop:geometry}]
Fix a query $q\in \T^d$ and a nonempty layer $\ell \in \{0,\ldots, L\}$. 
We begin by showing $d(W_\ell,q)\le Bh_\ell$. Let $v \in X_\ell$ and suppose $d(v,q)>Bh_\ell$. Our goal is to show that no such vertex $v$ can be the terminal vertex. Since the hierarchy belongs to $\mathfrak C_n$, there is a neighbor $u\in N_\ell^+(v)$ in the sector containing $\delta(q,v)/\|\delta(q,v)\|$ that fulfills $d(v,u)\le r_{\rm cov}h_\ell$. Together with the assumption $B>r_{\rm cov}/(2\cos\omega)$ this yields
\[
 d(v,u)\le r_{\rm cov}h_\ell<2Bh_\ell\cos\omega <2d(v,q)\cos\omega.
\]
By construction of the sectors, we have $\phi = \angle(\delta(q,v),\delta(u,v))\le \omega < \frac{\pi}{3}$. Since the cosine is decreasing on $[0,\pi]$, we get
\[
 d(v,u)<2d(v,q)\cos\omega \le 2d(v,q)\cos\phi,
\]
so Lemma~\ref{lem:progress} gives
\[
    d(u,q) < d(v,q).
\]
Hence, no vertex $v$ with $d(v,q)>Bh_\ell$ can be terminal. Since each
greedy step strictly decreases the distance to $q$ and $X_\ell$ is
finite, the search terminates at a vertex $W_\ell$ satisfying $d(W_\ell,q)\le Bh_\ell$.

For the second bound, fix a terminal $W_\ell$ and a nearest layer point $x_\ell$. If $W_\ell=q$, there is nothing to prove. Since $R_\ell(q)\le d(W_\ell,q)\le Bh_\ell$ and $Bh_L<1/4$, we get
\[
 \|\delta(q,W_\ell)-\delta(q,x_\ell)\|\le d(W_\ell,q)+R_\ell(q)\le2Bh_\ell \le2Bh_L <\tfrac{1}{2}.
\]
This means that 
\begin{align*}
\delta(q,W_\ell)-\delta(q,x_\ell)=\delta(x_\ell,W_\ell) \quad \text{ and } \quad
\|\delta(q,W_\ell)-\delta(x_\ell,W_\ell)\|=R_\ell(q).
\end{align*}
If we assume $d(W_\ell,q)>C_{\rm ann}R_\ell(q)$, then $R_\ell(q)<C_{\rm ann}^{-1}\|\delta(q,W_\ell)\|$ and
$x_\ell\ne W_\ell$. Take $u\in N_\ell^+(W_\ell)$ as the neighbor in the sector of $\delta(x_\ell,W_\ell)$. Since by construction $u$ is the closest neighbor to $W_\ell$ in that sector, we obtain 
\begin{align*}
    0<d(W_\ell,u)\le d(W_\ell,x_\ell) \quad \text{ and } \quad \angle(\delta(x_\ell,W_\ell),\delta(u,W_\ell))\le\omega.
\end{align*}
Then, Lemma~\ref{lem:cone} yields
\[
 d(u,q)\le\|\delta(q,W_\ell)-\delta(u,W_\ell)\|<\|\delta(q,W_\ell)\|=d(W_\ell,q),
\]
which contradicts that $W_\ell$ is the terminal vertex. Thus, $d(W_\ell,q)\le C_{\rm ann}R_\ell(q)$.
\end{proof}

\begin{proof}[Proof of Theorem~\ref{thm:deterministic}]
Given $\eps>0$, set
\[
\omega_\varepsilon = 2\arcsin\!\Big(\frac{\eps}{2(1+\eps)}\Big).
\]
Then, for all $\omega < \omega_\varepsilon$ we obtain
\[
1-2\sin(\omega/2) > 1-2\sin(\omega_\varepsilon/2) = \frac{1}{1+\eps},
\]
so that by~\eqref{eq:cann} we have $C_{\rm ann} < 1+\eps.$
Proposition~\ref{prop:geometry} therefore yields
\begin{align}\label{eq:lastboundthm1}
    d(W_\ell,q) \le \min \{Bh_\ell,C_{\rm ann}R_\ell(q)\} \le\min \{Bh_\ell,(1+\eps)R_\ell(q)\}.
\end{align}
If $q\in X_\ell$, then $R_\ell(q) = 0$ and equation~\eqref{eq:lastboundthm1} implies exact recovery.
\end{proof}

\subsection{A probabilistic criterion for uniform coverage}
\label{subsec:criterion}
For Theorem~\ref{thm:stochastic}, we want to show that the point process satisfies the conditions of Definition~\ref{def:class} with high probability, so we need local count estimates for the other points conditional on a point being present at a given location. For this we use the reduced Palm law.
Write $$\cN_\ell(S):=|X_\ell\cap S|, \qquad \mu_\ell(S):=\int_S\lambda_\ell(x)\,dx,$$ where $\lambda_\ell$ is the intensity of $X_\ell$.

For $\lambda_\ell(v)>0$, let $\PP_\ell^{!v}$ denote the reduced Palm law at $v$, with expectation $\E_\ell^{!v}$; see e.g.,~\cite{CoeurjollyMollerWaagepetersen2017}. Informally, this is the law of the point configuration conditioned on containing $v$, after removing that point. It is characterized by the reduced Campbell identity~\cite{CoeurjollyMollerWaagepetersen2017}: for every nonnegative measurable $g$,
\begin{equation}
\E\sum_{v\in X_\ell}
g\bigl(v,X_\ell\setminus\{v\}\bigr)
=
\int_{\T^d}
\lambda_\ell(v)
\E_\ell^{!v}[g(v,X)]\,dv.
\label{eq:campbell}
\end{equation}
Here, $X$ denotes the reduced configuration under $\PP_\ell^{!v}$. Define $\mu_\ell^{!v}(S):=\E_\ell^{!v}|X\cap S|$. Recall that $0<a\le b < \infty$ are the constants from Definition~\ref{def:inputs}.

\begin{definition}[Palm criterion]
\label{def:criterion}
The layer marginals satisfy the Palm criterion if $an_\ell\le\lambda_\ell(v)\le bn_\ell$ and for almost every $v$, for every Borel set $S$ and every $t\in\mathbb R$,
\begin{align}
\E e^{t\cN_\ell(S)}
&\le
\exp\{\mu_\ell(S)(e^t-1)\},
\label{eq:palmcrit1}\\
\mu_\ell(S)-\rho^\ell
&\le
\mu_\ell^{!v}(S)
\le
\mu_\ell(S),
\label{eq:palmcrit2}\\
\E_\ell^{!v}e^{t|X\cap S|}
&\le
\exp\{\mu_\ell^{!v}(S)(e^t-1)\}.
\label{eq:palmcrit3}
\end{align}
The last two bounds hold for $\lambda_\ell(v)\,dv$-almost every $v$.
\end{definition}

The following proposition states that under Definition~\ref{def:criterion}, the hierarchy satisfies the properties from Definition~\ref{def:class} with high probability.

\begin{proposition}
\label{prop:coverage}
Suppose that Definition~\ref{def:criterion} holds and the parameters are chosen as in Appendix~\ref{app:dimension}. Then,
$$\PP(\cG_n^c)+\PP(\cH_n^c)=O(n^{-M}).$$
\end{proposition}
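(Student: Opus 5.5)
We bound $\PP(\cG_n^c)$ by a union bound over layers $\ell\le L$ and, within a layer, over vertices via the reduced Campbell identity~\eqref{eq:campbell}. We then handle $\{X_L=\varnothing\}$ separately using~\eqref{eq:palmcrit1}.

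Fix a layer $\ell$. The number of vertices of $X_\ell$ violating either condition of Definition~\ref{def:class} has expectation
\[
\int_{\T^d}\lambda_\ell(v)\,\PP_\ell^{!v}\bigl(E_v\bigr)\,dv\le bn_\ell\sup_v\PP_\ell^{!v}(E_v),
\]
where $E_v$ is the bad event for the reduced configuration. Markov's inequality turns this into a probability bound. Since $n_\ell\le n$ and $L=O(\log n)$, it suffices to show $\PP_\ell^{!v}(E_v)=O(n^{-M-1-\eta})$ uniformly in $v$ and $\ell$, for some $\eta>0$.

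We handle the two conditions with radii $r_1<r_2=r_{\rm cov}$.

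For condition 1, we need $|X\cap B(v,r_{\rm cov}h_\ell)|\ge m$. By~\eqref{eq:palmcrit2}, $\mu_\ell^{!v}(B)\ge a c_d r_{\rm cov}^d\log n-1$. We choose $r_{\rm cov}$ of order $\sqrt d$, so that $c_d r_{\rm cov}^d$ is exponentially large and this mean exceeds $2C_{\rm pool}\log n$. The lower Chernoff tail, taking $t<0$ in~\eqref{eq:palmcrit3}, then gives $\PP(\text{count}<m)\le \exp(-c\,\mu)\le n^{-M-2}$ once $r_{\rm cov}^d c_d a$ is a large multiple of $C_{\rm pool}$ and $M$.

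For condition 2, pick a small radius $r_1h_\ell$. If $B(v,r_1h_\ell)$ contains at most $m$ points and each sector intersected with it, $S_j:=\cC_j(v)\cap B(v,r_1h_\ell)$, is nonempty, then $P_\ell(v)$ contains all points of $B(v,r_1h_\ell)$, and hence meets every sector. For the upper count, the upper Chernoff bound ($t>0$ in~\eqref{eq:palmcrit3}) with mean at most $b c_d r_1^d\log n\le m/e^2$ gives failure probability $\le e^{-m}\le n^{-C_{\rm pool}}$. For nonemptiness, $\PP(X\cap S_j=\varnothing)\le \inf_{t<0}\exp\{\mu(S_j)(e^t-1)\}\to\exp(-\mu^{!v}(S_j))$, with $\mu^{!v}(S_j)\ge a p_* c_d r_1^d\log n-1$. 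A union bound over $K$ sectors gives $K e\, n^{-ap_*c_dr_1^d}$.

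We therefore choose $r_1$ with $ap_*c_dr_1^d\ge M+3$ and then $C_{\rm pool}\ge e^2 b c_d r_1^d\ge e^2(b/a)(M+3)/p_*$. With $p_*=e^{-O(d)}$ from a partition of $\mathbb S^{d-1}$ into $K=e^{O(d)}$ cells of angular diameter below $\omega_\eps$, this is $\exp(O(d))$, as in Theorem~\ref{thm:stochastic}. We need $r_1h_\ell,r_{\rm cov}h_\ell<1/2$ so that ball volumes are $c_dr^d$. This is where $D=\exp(\frac12 d\log d+O(d))$ enters: $h_L^d\le 1/D$, and $r_{\rm cov}^d\sim d^{d/2}$ must be absorbed by $D$. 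The condition $d=o(\log n/\log\log n)$ ensures that the factors $K$, $C_{\rm pool}$, $L$ and $b$ (all $n^{o(1)}$) are beaten by the polynomial slack.

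Finally, $\PP(X_L=\varnothing)=\lim_{t\to-\infty}\E e^{t\cN_L(\T^d)}\le\exp(-an_L)\le n^{-aD}=O(n^{-M})$ by~\eqref{eq:palmcrit1}, provided $aD\ge M$. Combined with $\cH_n^c\subseteq\cG_n^c\cup\{X_L=\varnothing\}$, this gives the claim.

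The main obstacle is the parameter bookkeeping in the second step. It requires a sector partition with $p_*$ only exponentially small in $d$ while $\omega<\omega_\eps$, together with $r_{\rm cov}h_L<1/2$. I would first try a cubed-sphere or Voronoi partition from a maximal $\omega/2$-net, which has size $e^{O(d)}$ with cell measures comparable up to $e^{O(d)}$.
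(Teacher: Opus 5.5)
Your proposal is correct and follows the paper's proof essentially step for step. It uses the same sufficient events: at least $m$ points in $B(v,r_{\rm cov}h_\ell)$, at most $m$ points in a smaller inner ball, and a point in every sector of that inner ball. It then applies Chernoff bounds under the reduced Palm law via \eqref{eq:palmcrit2}--\eqref{eq:palmcrit3}, the reduced Campbell identity with a union bound over layers and $K=n^{o(1)}$ sectors, and the bound $\PP(X_L=\varnothing)\le n^{-aD}$. The only differences are in the constants. You take $t=2$ with inner mean at most $m/e^2$ and fix the inner radius before $C_{\rm pool}$, whereas the paper takes $t=\log 2$ with inner mean at most $m/4$ and sets $r_*^d=C_{\rm pool}/(4bc_d)$. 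This is pure bookkeeping.
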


The Chernoff bounds in the next lemma are essential for the proof of Proposition~\ref{prop:coverage}.

\begin{lemma}
\label{lem:counts}
If a nonnegative integer-valued $Z$ with mean $\mu$ satisfies
$\E e^{tZ}\le e^{\mu(e^t-1)}$ for every real $t$, then
\begin{equation}
 \PP(Z=0)\le e^{-\mu},\quad
 \PP(Z<\mu/2)\le e^{-\mu/8},\quad
 \PP(Z\ge k)\le e^{-k\log2+\mu}.
 \label{eq:count-bounds}
\end{equation}
\end{lemma}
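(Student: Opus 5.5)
All three bounds come from the exponential Markov inequality $\PP(Z\in A)\le e^{-\theta a}\E e^{\theta Z}$, with the hypothesis $\E e^{tZ}\le e^{\mu(e^t-1)}$ in place of the moment generating function. For each bound I pick a convenient $t$ rather than optimizing fully. The same MGF bound is used for both tails, with $t<0$ for the lower tail and $t>0$ for the upper tail.

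For $\PP(Z=0)$, I use that $e^{tZ}\ge \1\{Z=0\}$ for every $t\le0$. Hence $\PP(Z=0)\le \E e^{tZ}\le e^{\mu(e^t-1)}$ for every $t<0$. Letting $t\to-\infty$ gives $\PP(Z=0)\le e^{-\mu}$. Alternatively, I note that $\E e^{tZ}\ge \PP(Z=0)$ and take the limit directly.

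For the upper tail, I take $t=\log 2$. Markov's inequality then gives
\[
\PP(Z\ge k)\le e^{-k\log 2}\,\E e^{Z\log2}\le e^{-k\log 2+\mu(2-1)}=e^{-k\log2+\mu},
\]
which is exactly the stated bound.

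For the lower tail, I fix $t=-s$ with $s>0$. Then
\[
\PP(Z<\mu/2)\le \PP\bigl(e^{-sZ}\ge e^{-s\mu/2}\bigr)\le \exp\bigl\{s\mu/2+\mu(e^{-s}-1)\bigr\}.
\]
The optimal choice $e^{-s}=1/2$, i.e. $s=\log2$, gives the exponent $-\mu\bigl(\tfrac12-\tfrac12\log2\bigr)$. It remains to check that $\tfrac12(1-\log 2)\approx 0.153\ge 1/8$, which holds. This yields $\PP(Z<\mu/2)\le e^{-\mu/8}$.

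The only mildly delicate step is this last numerical comparison, together with the edge case $\mu=0$. For $\mu=0$ all three bounds are trivial: the first two right-hand sides equal $1$, and in the third $e^{-k\log 2}\ge \PP(Z\ge k)$ because $Z=0$ almost surely, since $\E e^{tZ}\le1$ for all $t>0$.
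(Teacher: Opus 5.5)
Your proposal is correct and matches the paper's proof: it applies exponential Markov with the assumed MGF bound, taking $t\to-\infty$, $t=-\log 2$ and $t=\log 2$ for the three bounds respectively. You also supply the numerical check $\tfrac12(1-\log 2)\approx 0.153\ge 1/8$ that the paper leaves implicit, while your separate treatment of $\mu=0$ is harmless but unnecessary, since the general argument already covers that case.
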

\begin{proof}
These standard Chernoff bounds follow from the assumed Poisson MGF domination, see e.g.~\cite{BoucheronLugosiMassart2013}, by taking $t\to-\infty$, $t=-\log2$ and $t=\log2$ respectively.
\end{proof}

To prove Proposition~\ref{prop:coverage}, we show that three counting events which imply the conditions in Definition~\ref{def:class} occur with high probability and conclude by using a union bound and~\eqref{eq:campbell}.

\begin{proof}[Proof of Proposition~\ref{prop:coverage}]
Fix a layer $\ell$ and a point $v$ for which the bounds in Definition~\ref{def:criterion} hold. Fix $\zeta = M+5$ and choose $$r_* = \left(\frac{C_{\rm pool}}{4bc_d}\right)^{1/d}.$$ Under $\PP_\ell^{!v}$, write
\[
\cN^{\rm out}:=|X\cap B(v,r_{\rm cov}h_\ell)|, \quad \cN^{\rm in}:= |X\cap B(v,r_*h_\ell)|, \quad \cN^{\rm sec}_j:=|X\cap B(v,r_*h_\ell)\cap\cC_j(v)|.
\]
If
\begin{align}\label{eq:innerouter ball condition}
   \cN^{\rm out} \ge m,\qquad \cN^{\rm in}\le m,\quad \text{ and } \quad \cN^{\rm sec}_j\ge 1\qquad\text{for every }j,
\end{align}
then the two conditions of Definition~\ref{def:class} hold: the first inequality implies (1), while the other two imply (2). Our goal now is to show that the inequalities in \eqref{eq:innerouter ball condition} hold with high probability. Since $h_\ell^d=\log n/n_\ell$ by equation~\eqref{eq:scale},
\begin{align}\label{eq:chernoff_prerequirement_1}
\mu = \E_\ell^{!v}\cN^{\rm out}\ge ac_dr_{\rm cov}^d\log n-\rho^\ell \ge \frac{ac_dr_{\rm cov}^d}{2}\log n \ge 2m
\end{align}
for all sufficiently large $n$. We apply Lemma~\ref{lem:counts}, via \eqref{eq:palmcrit3} and \eqref{eq:count-bounds}, and consider the parameter choices from Appendix~\ref{app:dimension} to obtain
\[
\PP_\ell^{!v}\bigl(\cN^{\rm out}<m\bigr)\le e^{-\mu/8} \le n^{-\zeta}.
\]
Similarly, we can show that
\begin{align}\label{eq:chernoff_prerequirement_2}
\E_\ell^{!v}|X\cap B(v,r_*h_\ell)|
\le \frac{C_{\rm pool}}4\log n,
\qquad
\E_\ell^{!v}\cN^{\rm sec}_j
\ge \frac{ap_jC_{\rm pool}}{4b}\log n-\rho^\ell.
\end{align}
Using Lemma~\ref{lem:counts} with $p_j\ge p_*$ and choosing $C_{\rm pool}$, we can show
\begin{align}\label{eq:estimate where Cpool is chosen}
    \PP_\ell^{!v}\bigl(\cN^{\rm in}>m\bigr)\le n^{-\zeta} \quad \text{ and } \quad \PP_\ell^{!v}\bigl(\cN^{\rm sec}_j=0\bigr)\le e^{\rho^\ell} n^{-\zeta}.
\end{align}
Define $$\cF_{\ell,v} := \{ \cN^{\rm out} <m \} \cup \{ \cN^{\rm in} >m \} \cup \bigcup\limits_{j=1}^K \{ \cN^{\rm sec}_j=0 \}$$ as the event that one of the inequalities in~\eqref{eq:innerouter ball condition} fails. By a union bound, we get
\begin{align*}
    \PP_\ell^{!v}(\cF_{\ell,v}) \le \PP_\ell^{!v}(\cN^{\rm out}<m) + \PP_\ell^{!v}(\cN^{\rm in}>m) + \sum\limits_{j=1}^K\PP_\ell^{!v}(\cN^{\rm sec}_j=0) \le (2+Ke^{\rho^\ell})n^{-\zeta}.
\end{align*}
Hence, using the reduced Campbell identity~\eqref{eq:campbell} yields
\begin{align*}
\PP(\cG_n^c)
&\le
\E\sum_{\ell=0}^L\sum_{v\in X_\ell}
\mathbf 1_{\cF_{\ell,v}} = \sum_{\ell=0}^L\int\limits_{\T^d} \lambda_\ell(v) \E_\ell^{!v}(\mathbf 1_{\cF_{\ell,v}}) dv \\
&\le \sum_{\ell=0}^L bn_\ell(2+Ke^{\rho^\ell})n^{-\zeta} \le\frac{b(2+Ke)}{1-\rho}\,n^{1-\zeta}=O(n^{-M-3}),
\end{align*}
where $K=n^{o(1)}$, since $d = o(\log n/\log\log n)$ and $\zeta = M+5$.
Finally, by using $aD \ge M$ and Lemma~\ref{lem:counts} we obtain
\[
\PP(X_L=\varnothing)\le e^{-an_L}\le n^{-aD}
\]
and since $\cH_n=\cG_n\cap\{X_L\ne\varnothing\}$, we get
\(
\PP(\cH_n^c)
\le
\PP(\cG_n^c)+\PP(X_L=\varnothing).
\)
The parameter choices therefore yield
\[
\PP(\cG_n^c)+\PP(\cH_n^c)=O(n^{-M}).\qedhere
\]
\end{proof}

\subsection{Verification for the three point-process laws}
Section~\ref{subsec:criterion} reduced the probabilistic analysis to the Palm criterion from Definition~\ref{def:criterion}. We now show that these estimates hold for the three input classes of Definition~\ref{def:inputs}. 

\begin{proposition}[Verification of the Palm criterion]
\label{prop:verification}
Poisson and determinantal input satisfy the Palm criterion from Definition~\ref{def:criterion}. Specifically, the difference between the original and the reduced Palm mean measures is zero for Poisson input and at most $\rho^\ell$ for determinantal input. For Cox input, the estimates in the Palm criterion hold conditionally on $F_n$, uniformly over all realizations with $a \le F_n \le b$.
\end{proposition}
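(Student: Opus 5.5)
The plan is to first handle the layer thinning, which all three models share, and then verify the three estimates of Definition~\ref{def:criterion} model by model.

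\textbf{Thinning.} Since the heights are i.i.d.\ and independent of $X_0$, the layer $X_\ell$ is an independent $\rho^\ell$-thinning of $X_0$. Thinning transforms the quantities we need as follows.

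\begin{itemize}
\item Joint intensities scale as $\varphi_\ell^{(k)}=\rho^{\ell k}\varphi^{(k)}$. In particular $\lambda_\ell=\rho^\ell\varphi^{(1)}$, which gives $an_\ell\le\lambda_\ell\le bn_\ell$ from the bounds $an\le\varphi^{(1)}\le bn$.
\item For Poisson input, the thinned process is again Poisson.
\item For a DPP with kernel $\cK_n$, the thinned process is a DPP with kernel $\rho^\ell\cK_n$. Its eigenvalues $\rho^\ell\alpha_{n,i}$ stay in $[0,1]$.
\item For Cox input, conditionally on $F_n$ the thinned process is Poisson with intensity $n_\ell F_n$. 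This is why the Cox statement is phrased conditionally on $F_n$ (taking $a\le F_n\le b$ pointwise).
\end{itemize}

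\textbf{Poisson and Cox (conditionally).} For these, \eqref{eq:palmcrit1} holds with equality, since the Poisson MGF is $\exp\{\mu(e^t-1)\}$. The reduced Palm law of a Poisson process is the process itself (Slivnyak–Mecke). Hence $\mu_\ell^{!v}=\mu_\ell$, which gives \eqref{eq:palmcrit2} with zero gap, and \eqref{eq:palmcrit3} coincides with \eqref{eq:palmcrit1}.

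\textbf{Determinantal input.} Here I would invoke standard DPP facts in three steps.

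1. For a Hermitian trace-class kernel with spectrum in $[0,1]$, restrict to a Borel set $S$. The count $\cN_\ell(S)$ is then distributed as a sum of independent Bernoulli variables whose parameters are the eigenvalues $\lambda_i$ of the restricted operator $\1_S\rho^\ell\cK_n\1_S$ (Hough et al.). Each Bernoulli satisfies $1+\lambda_i(e^t-1)\le\exp\{\lambda_i(e^t-1)\}$. Since $\sum_i\lambda_i=\mu_\ell(S)$, this gives \eqref{eq:palmcrit1} for all real $t$.

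2. By Shirai–Takahashi, the reduced Palm process at $v$ (when $\cK(v,v)>0$) is again a DPP. Its kernel is
\[
\cK^{!v}(x,y)=\cK(x,y)-\frac{\cK(x,v)\cK(v,y)}{\cK(v,v)},\qquad \cK=\rho^\ell\cK_n,
\]
which is a positive contraction, with $0\le\cK^{!v}\le\cK$ as operators. The same Bernoulli-sum argument then gives \eqref{eq:palmcrit3}.

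3. Integrating $\cK^{!v}(x,x)$ over $S$ gives
\[
\mu_\ell^{!v}(S)=\mu_\ell(S)-\frac{1}{\cK(v,v)}\int_S|\cK(x,v)|^2dx .
\]
The subtracted term is nonnegative and at most $\frac{1}{\cK(v,v)}\int|\cK(x,v)|^2dx$. Using the reproducing-type bound $\int|\cK(x,v)|^2dx\le\cK(v,v)$ (valid because $0\le\cK\le I$), this gives $\mu_\ell(S)-\mu_\ell^{!v}(S)\le1$, not $\rho^\ell$.

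\textbf{Main obstacle: getting the gap down to $\rho^\ell$.} I expect this to be the hardest step, since the rest is routine.

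The direct Palm-kernel computation in step 3 loses the thinning factor. Instead I would compute the Palm law of the thinned process via the unthinned one. Condition on $v$ being present in $X_\ell$. Because the heights are independent of $X_0$, the reduced Palm law of $X_\ell$ at $v$ equals the $\rho^\ell$-thinning of the reduced Palm law of $X_0$ at $v$. Its mean measure is therefore $\rho^\ell\mu_0^{!v}$. The gap then becomes $\rho^\ell\bigl(\mu_0(S)-\mu_0^{!v}(S)\bigr)\le\rho^\ell\cdot1$, by step 3 applied to $\cK_n$, whose eigenvalues satisfy $\alpha_{n,i}\le1$.

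The care needed here is in two places:
\begin{itemize}
\item justifying this Palm/thinning commutation from the reduced Campbell identity \eqref{eq:campbell} with marks;
\item checking the bound $\int|\cK_n(x,v)|^2dx\le\cK_n(v,v)$ for almost every $v$, using the eigen-expansion assumption on $\cK_n(x,x)$ from Definition~\ref{def:inputs}.
\end{itemize}
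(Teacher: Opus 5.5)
Your proposal is correct, and apart from one step it follows the paper's argument. Thinning preserves each model class. Slivnyak--Mecke handles Poisson input and, conditionally on $F_n$, Cox input. For the DPP, the Bernoulli-sum representation and the Shirai--Takahashi Palm kernel give \eqref{eq:palmcrit1} and \eqref{eq:palmcrit3}.

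The difference is in how you bring the Palm loss down to $\rho^\ell$. The paper works directly with the layer kernel $\cK=\rho^\ell\cK_n$, whose eigenvalues are $\rho^\ell\alpha_{n,i}\le\rho^\ell$. Hence $(\rho^\ell\alpha_{n,i})^2\le\rho^\ell\cdot\rho^\ell\alpha_{n,i}$, and so $\int|\cK(x,v)|^2\,dx\le\rho^\ell\,\cK(v,v)$. In operator terms, it uses $0\le\cK\le\rho^\ell I$ instead of $0\le\cK\le I$. So your step~3 does not actually lose the thinning factor; you only applied the weaker operator bound. The loss $\int_S|\cK(x,v)|^2\,dx/\cK(v,v)$ is homogeneous of degree one in the kernel, so it equals $\rho^\ell\int_S|\cK_n(x,v)|^2\,dx/\cK_n(v,v)$. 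That is exactly what your commutation route produces, and consistently, your step~2 kernel equals $\rho^\ell\cK_n^{!v}$, the thinning of the layer-$0$ Palm kernel.

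The commutation ``reduced Palm of the thinning equals the thinning of the reduced Palm'' is valid. To justify it, apply \eqref{eq:campbell} for $X_0$ after averaging over the i.i.d.\ heights, which are independent of $X_0$. Your route buys generality: for any input it shows $\mu_\ell-\mu_\ell^{!v}=\rho^\ell(\mu_0-\mu_0^{!v})$, so you only need the layer-$0$ Palm loss to be at most one point. The paper's route buys brevity: a one-line spectral inequality with no marked Palm argument.
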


Before we can prove Proposition~\ref{prop:verification}, we need the following lemma.

\begin{lemma}[Integrated determinantal Palm loss]
\label{lem:palm-loss}
The DPP kernel on layer $\ell$ is $\cK=\rho^\ell \cK_n$. Its reduced Palm kernel
at intensity-almost every $v$ is
\begin{equation}
 \cK^{!v}(x,y)=\cK(x,y)-\frac{\cK(x,v)\cK(v,y)}{\cK(v,v)}.
 \label{eq:palm-kernel}
\end{equation}
It is a positive contraction and for every Borel $S$,
\begin{equation}
 0\le\mu_\ell(S)-\mu_\ell^{!v}(S)
 =\frac{\int_S|\cK(x,v)|^2\,dx}{\cK(v,v)}\le\rho^\ell\le1.
 \label{eq:palm-loss}
\end{equation}
\end{lemma}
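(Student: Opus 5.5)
The lemma has three parts: the thinned kernel, the Palm kernel formula with its contraction property, and the integrated loss bound. I would prove them in that order.

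\emph{Layer kernel.} Independent Bernoulli($\rho^\ell$) thinning of a DPP with kernel $\cK_n$ is again a DPP, with kernel $\rho^\ell\cK_n$. To see this, fix distinct points $x_1,\ldots,x_k$. Each survives independently with probability $\rho^\ell$, so the $k$-point joint intensity is multiplied by $\rho^{\ell k}$. On the other hand, $\det[\rho^\ell\cK_n(x_i,x_j)]=\rho^{\ell k}\det[\cK_n(x_i,x_j)]$. The integral operator of $\cK=\rho^\ell\cK_n$ has eigenvalues $\rho^\ell\alpha_{n,i}\in[0,\rho^\ell]$ and the same eigenfunctions. So $\cK$ is a Hermitian positive contraction, is trace class, and has operator norm at most $\rho^\ell$.

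\emph{Palm kernel.} I would invoke the standard result of Shirai--Takahashi on reduced Palm measures of Hermitian DPPs. It states that for $\cK(v,v)\,dv$-almost every $v$ with $\cK(v,v)>0$, the reduced Palm law is the DPP with kernel \eqref{eq:palm-kernel}. To check this directly, compute the joint intensities of the reduced Palm law as $\varphi^{(k+1)}(v,x_1,\ldots,x_k)/\varphi^{(1)}(v)$. By the Schur complement, this equals $\det[\cK^{!v}(x_i,x_j)]$.

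Positivity and contraction come from writing $\cK^{!v}$ in operator form. Using the Mercer expansion, set $e_v(\cdot)=\cK(\cdot,v)$. Then $\cK^{!v}=\cK-\frac{e_v\otimes \overline{e_v}}{\cK(v,v)}$. Equivalently, $\cK^{!v}=\cK^{1/2}(I-P)\cK^{1/2}$, where $P$ is the rank-one orthogonal projection onto $\cK^{1/2}(\cdot,v)/\|\cK^{1/2}(\cdot,v)\|$. This uses $\|\cK^{1/2}(\cdot,v)\|^2=\cK(v,v)$, which holds almost everywhere by the diagonal assumption $\cK_n(x,x)=\sum_i\alpha_{n,i}|\phi_{n,i}(x)|^2$. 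Hence $0\le \cK^{!v}\le \cK\le \rho^\ell I$.

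\emph{Mean measure identity and bound.} Integrating the diagonal of \eqref{eq:palm-kernel} over $S$ gives $\mu_\ell(S)-\mu_\ell^{!v}(S)=\int_S|\cK(x,v)|^2dx/\cK(v,v)$, using $\cK(x,v)\cK(v,x)=|\cK(x,v)|^2$. This difference is nonnegative. It is increasing in $S$, so it suffices to take $S=\T^d$. There, $\int|\cK(x,v)|^2dx=(\cK^2)(v,v)=\sum_i(\rho^\ell\alpha_{n,i})^2|\phi_{n,i}(v)|^2\le\rho^\ell\sum_i\rho^\ell\alpha_{n,i}|\phi_{n,i}(v)|^2=\rho^\ell\cK(v,v)$. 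This uses $\alpha_{n,i}\le1$ and the diagonal identity almost everywhere. Dividing by $\cK(v,v)$ gives the bound $\rho^\ell\le1$.

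\emph{Main obstacle.} The delicate step is the measure-theoretic one. Kernels are only defined almost everywhere, so I must make sense of $\cK(\cdot,v)$ and the diagonal identity at intensity-almost every $v$. I would handle this by fixing the Mercer-type version $\cK(x,y)=\sum_i\rho^\ell\alpha_{n,i}\phi_{n,i}(x)\overline{\phi_{n,i}(y)}$. This version converges absolutely for almost every pair by Cauchy--Schwarz and the finite trace. I would then restrict to the full-measure set of $v$ where the diagonal identity holds and the Palm formula applies.
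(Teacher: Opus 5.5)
Your proposal is correct and follows the paper's proof. Independent thinning scales the kernel to $\rho^\ell\cK_n$, the Schur complement together with the Shirai--Takahashi Palm theory identifies $\cK^{!v}$, and the spectral bound $(\rho^\ell\alpha_{n,i})^2\le\rho^\ell\cdot\rho^\ell\alpha_{n,i}$ on the whole torus yields the loss bound. Your explicit factorization $\cK^{!v}=\cK^{1/2}(I-P)\cK^{1/2}$ and your care about kernel versions only make self-contained what the paper takes from the cited theory.
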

\begin{proof}[Proof of Lemma~\ref{lem:palm-loss}]
The thinning of the layers multiplies an $r$-point correlation density by $\rho^{\ell r}$,
which gives the kernel $\rho^\ell \cK_n$. The Schur complement of the
correlation determinant at $v$ gives \eqref{eq:palm-kernel}. By standard reduced Palm theory for Hermitian determinantal processes, see e.g. \cite{ShiraiTakahashi2003,MollerOReilly2021}, the reduced Palm process is determinantal with kernel $\cK^{!v}$ which satisfies $$0 \le \cK^{!v} \le \cK \le I.$$
Let \(\alpha_{n,i}\) and \(\phi_{n,i}\) denote the eigenvalues and orthonormal eigenfunctions of the integral operator with kernel \(\cK_n\). The spectral expansion is
\[ 
\cK(x,y) = \sum_i (\rho^\ell\alpha_{n,i}) \phi_{n,i}(x)\overline{\phi_{n,i}(y)}.
\] 
By orthonormality of the eigenfunctions, we get
\[ 
\int_{\T^d}|\cK(x,v)|^2\,dx = \sum_i (\rho^\ell\alpha_{n,i})^2 |\phi_{n,i}(v)|^2. 
\] 
Since \(0\le\alpha_{n,i}\le1\), we obtain 
\[ 
(\rho^\ell\alpha_{n,i})^2 \le \rho^\ell(\rho^\ell\alpha_{n,i}), 
\] 
and hence 
\[ 
\int_{\T^d}|\cK(x,v)|^2\,dx \le \rho^\ell \sum_i (\rho^\ell\alpha_{n,i}) |\phi_{n,i}(v)|^2 = \rho^\ell \cK(v,v), 
\] 
which implies \eqref{eq:palm-loss}.
\end{proof}

The three models require only minor variations on the same argument. For Poisson input, the required count and Palm estimates follow from the standard formulas for Poisson point processes. For determinantal input, local point counts can be represented as sums of independent Bernoulli variables, while Lemma~\ref{lem:palm-loss} shows that passing to the reduced Palm process changes the relevant means by at most $\rho^\ell$. For Cox input, we condition on the random environment, apply the corresponding Poisson estimates and then average over the environment. 

\begin{proof}[Proof of Proposition~\ref{prop:verification}]
Independent thinning preserves a Poisson law with intensity $n_\ell$.
The Poisson moment-generating function gives \eqref{eq:palmcrit1} with equality and the Mecke formula identifies its reduced Palm law with the unconditioned process~\cite{CoeurjollyMollerWaagepetersen2017}. Thus, \eqref{eq:palmcrit2} and \eqref{eq:palmcrit3} hold, since $\mu_\ell^{!v}(S)=\mu_\ell(S)$.

For determinantal input, the count \(\cN_\ell(S)\) is distributed as a sum of independent Bernoulli random variables whose parameters are the eigenvalues $(\theta_i)_i$ of 
\(\mathbf 1_S \cK \mathbf 1_S\), see e.g. \cite{Hough2006}. Hence, 
\[ 
\E e^{t\cN_\ell(S)} = \prod_i\bigl(1+\theta_i(e^t-1)\bigr) \le \exp\!\Big\{(e^t-1)\sum_i\theta_i\Big\} = \exp\bigl\{\mu_\ell(S)(e^t-1)\bigr\}, 
\] 
which shows \eqref{eq:palmcrit1}. By Lemma~\ref{lem:palm-loss} the reduced Palm process is again determinantal with a positive-contraction kernel whose mean measure differs by at most $\rho^\ell$ on every Borel set. Therefore, \eqref{eq:palmcrit2} holds and \eqref{eq:palmcrit3} follows with the same argument as \eqref{eq:palmcrit1}. 

For Cox input, condition on the environment \(F_n=f\). The Poisson argument applies with intensity \(n_\ell f\), uniformly over \(a\le f\le b\). Thus, the Palm criterion holds conditionally with $\mu_\ell^{!v}(S)=\mu_\ell(S)$, uniformly over admissible environments. Applying Proposition~\ref{prop:coverage} conditionally and then averaging establishes the required bounds for Cox input.
\end{proof}

\subsection{Proof of Theorem~\ref{thm:stochastic}}
To prove Theorem~\ref{thm:stochastic}, we combine the results of the previous subsections with Theorem~\ref{thm:deterministic}.

\begin{proof}[Proof of Theorem~\ref{thm:stochastic}] By Propositions~\ref{prop:verification} and~\ref{prop:coverage}, \eqref{eq:good-probability} holds; in particular $ \PP(\cH_n^c)=O(n^{-M}). $ On the event $\cH_n$, Theorem~\ref{thm:deterministic} gives 
\[ 
d(W_0,q)\le (1+\varepsilon)R_0(q) 
\] 
simultaneously for every query and every choice of entry point. This proves that~\eqref{eq:main-ann} holds with high probability for all vertices and layers. 
\end{proof}

\section{Proof of Theorem~\ref{thm:complexity}}
\label{sec:cost-proof}

Theorem~\ref{thm:deterministic} shows that the terminal vertex on each layer $\ell$ lies within distance $Bh_\ell$ of the query, as illustrated in Figure~\ref{fig:layer-descent}. As the greedy search only moves towards vertices that are closer to $q$, for $\ell \ge 1$ every vertex visited during the search at layer $\ell-1$ lies in the ball $B(q,Bh_\ell)$. Proving Theorem~\ref{thm:complexity} therefore reduces to a counting problem showing that the number of greedy hops is bounded by the number of points in a specific region. 

\subsection{Hops on non-top layers}

\begin{figure}[t]
\centering
\begin{tikzpicture}[
    scale=0.7,
    >=Stealth,
    every node/.style={font=\scriptsize}
]
\coordinate (q) at (0,0);
\draw[hsnwGray!55,fill=hsnwGray!7] (q) circle (2.6);
\draw[hsnwGray!65,fill=white] (q) circle (1.95);
\draw[hsnwGray!75,fill=white] (q) circle (1.3);
\draw[hsnwBlue,very thick,fill=white] (q) circle (0.65);
\fill[hsnwRed] (q) circle (2pt) node[below right] {$q$};

\node[hsnwGray] at (3.08,0.15) {$Bh_L$};
\node[hsnwGray] at (2.51,-0.45) {$Bh_{\ell+1}$};
\node[hsnwGray] at (1.60,0.45) {$Bh_\ell$};
\node[hsnwBlue] at (1.01,-0.1) {$Bh_0$};
\draw[hsnwBlue,very thick,->] (-2.3,1.55) -- (-1.7,1.15);
\draw[hsnwBlue,very thick,->] (-1.7,1.15) -- (-1.0,0.7);
\draw[hsnwBlue,very thick,->] (-1.0,0.7) -- (-0.45,0.25);
\node[align=center] at (-2.55,2.0) {descent through\\layers};

\end{tikzpicture}
\caption{The
final vertex on layer $\ell$ is at distance at most $Bh_\ell$ from $q$.  Repetition down to
layer $0$ yields that the final scale is $B(\log n/n)^{1/d}$.}
\label{fig:layer-descent}
\Description{Greedy descent through successive layers localizes the endpoint in smaller neighborhoods of the query.}
\end{figure}

The next proposition shows that the number of hops on a single non-top layer has an exponentially decaying tail. 

\begin{proposition}
\label{prop:layer-cost}
Under the Palm criterion from Definition~\ref{def:criterion} and the parameter choices of Appendix~\ref{app:dimension}, there exists $\gamma > 0$ such that for each deterministic query, all $\ell<L$ and integers $k\ge1$, 
\[
 \PP(T_\ell\ge k,\cH_n)\le 2e^{-\gamma k}.
\]
In particular,
\begin{equation}
 \sup_{\ell<L}\E[T_\ell\1_{\cH_n}]
 \le\frac{2}{e^\gamma-1}=O(C_{\rm ann}^d).
 \label{eq:layer-mean}
\end{equation}
The same estimates hold for bounded Cox input.
\end{proposition}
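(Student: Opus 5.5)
On $\cH_n$, apply Theorem~\ref{thm:deterministic} to layer $\ell+1$. It gives $d(W_{\ell+1},q)\le \min\{Bh_{\ell+1},C_{\rm ann}R_{\ell+1}(q)\}$. The search on layer $\ell$ starts at $W_{\ell+1}\in X_\ell$, and every greedy hop strictly decreases the distance to $q$. Hence all vertices visited on layer $\ell$ are distinct and lie in $B(q,d(W_{\ell+1},q))$. Therefore
\[
T_\ell\le \cN_\ell\bigl(B(q,C_{\rm ann}R_{\ell+1}(q))\bigr),
\]
and also $T_\ell\le\cN_\ell(B(q,Bh_{\ell+1}))$. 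The naive second bound only gives a mean of order $\log n$ per layer, which is too weak. So the plan is to use the \emph{relative} bound: the ball of radius $C_{\rm ann}R_{\ell+1}$ around $q$ contains no layer-$(\ell+1)$ points inside radius $R_{\ell+1}$. Since $X_{\ell+1}$ is an independent $\rho$-thinning of $X_\ell$ (heights are independent of the process), layer-$\ell$ points in a ball that is nearly empty of layer-$(\ell+1)$ points should be few.

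Concretely, I would condition on $X_\ell$ and use only the independent coin flips deciding which of its points survive to layer $\ell+1$. Order the points of $X_\ell$ by distance to $q$ as $y_1,y_2,\dots$, with $r_i=d(y_i,q)$. Let $J$ be the index of the first point retained in $X_{\ell+1}$. Given $X_\ell$, $J$ is geometric: $\PP(J>j\mid X_\ell)=(1-\rho)^j$. Then $R_{\ell+1}=r_J$, and
\[
T_\ell\le \#\{i: r_i\le C_{\rm ann} r_J\}.
\]
It then remains to show that the number of layer-$\ell$ points within radius $C_{\rm ann}r_J$ is at most a constant factor $\kappa$ times $J$ (plus a constant), except on an event of exponentially small probability. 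Given this, $\PP(T_\ell\ge k,\cH_n)\le \PP(J\ge k/\kappa)+(\text{error})\le (1-\rho)^{k/\kappa}+e^{-\gamma k}$. This yields $\gamma\asymp \rho/\kappa$. Summing the tail gives \eqref{eq:layer-mean}, and the claim is that $\kappa = O(C_{\rm ann}^d)$ up to model constants.

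The main obstacle is this volume-doubling comparison: $\cN_\ell(B(q,C_{\rm ann}r))\lesssim C_{\rm ann}^d\,\cN_\ell(B(q,r))$, uniformly over the relevant radii $r$, with exponentially small failure probability in $k$. I would discretize radii geometrically. For each radius $s$ with $\mu_\ell(B(q,s))$ of order $k$, apply Lemma~\ref{lem:counts} via \eqref{eq:palmcrit1}: the count in $B(q,s)$ is at least $\mu/2$ except with probability $e^{-\mu/8}$, and the count in $B(q,C_{\rm ann}s)$ is at most $2\cdot (b/a) C_{\rm ann}^d\mu$ except with probability $e^{-c\mu}$. The intensity bounds $an_\ell\le\lambda_\ell\le bn_\ell$ give $\mu_\ell(B(q,C_{\rm ann}s))\le (b/a)C_{\rm ann}^d\mu_\ell(B(q,s))$, valid while $C_{\rm ann}s<1/2$. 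This holds because $C_{\rm ann}R_{\ell+1}\le Bh_{\ell+1}\le Bh_L<1/4$ on $\cH_n$. The query is deterministic, so no Palm argument is needed. A union bound over the $O(\log k)$-many dyadic scales still leaves a tail $e^{-\gamma k}$ after shrinking $\gamma$. Combining this with the geometric law of $J$ gives the stated bound with the constant $2$. For Cox input, I would run the identical argument conditionally on $F_n$, uniformly over $a\le F_n\le b$, and then average over the environment.
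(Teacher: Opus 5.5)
Your proposal is correct in substance but takes a different route from the paper's proof. The paper works at one deterministic scale $s_k$, defined by $bn_\ell c_d(C_{\rm ann}s_k)^d=k/2$, and uses only two events. The first is $\{R_{\ell+1}(q)>s_k\}$, i.e.\ $X_{\ell+1}\cap B(q,s_k)=\varnothing$, bounded directly by the layer-$(\ell+1)$ void estimate $e^{-an_{\ell+1}c_ds_k^d}=e^{-a\rho k/(2bC_{\rm ann}^d)}$. The second is $\{\cN_\ell(B(q,C_{\rm ann}s_k))\ge k\}$, bounded by the upper Chernoff tail since its mean is at most $k/2$. Your conditioning on $X_\ell$ and the geometric index $J$ of the first survivor splits the paper's void event into two pieces: $J$ large, or $B(q,s)$ underfull in layer $\ell$. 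The two routes amount to the same thing, since $\PP(X_{\ell+1}\cap S=\varnothing)=\E[(1-\rho)^{\cN_\ell(S)}]\le e^{-\rho\mu_\ell(S)}$ by \eqref{eq:palmcrit1}. What your version buys is a transparent explanation of the factor $C_{\rm ann}^d$ (first-survivor index times a volume ratio). The price is an extra error term and somewhat worse constants.

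Two points need tightening.

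First, you do not need volume doubling over a range of radii, and your intermediate claim ``at most $\kappa J$ plus a constant'' is false at scales where the mean count is $O(1)$. Set $j_0=\lceil k/\kappa\rceil-1$. On $\{J\le j_0\}$ you have $r_J\le r_{j_0}$, so it suffices that $\cN_\ell(B(q,s_k))\ge j_0$ and $\cN_\ell(B(q,C_{\rm ann}s_k))<k$ at the single scale $s_k$. Taking $\kappa\ge 4(b/a)C_{\rm ann}^d$ makes the first of these fail with probability at most $e^{-ak/(16bC_{\rm ann}^d)}$. A union bound over $O(\log k)$ scales, by contrast, leaves a prefactor that cannot be absorbed by shrinking $\gamma$ by an absolute constant. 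It therefore yields $O(C_{\rm ann}^d)$ only up to a slowly growing factor in $d$, not the bound as stated.

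Second, your torus-embedding justification is misplaced. The event $\cH_n$ controls the random radius $R_{\ell+1}(q)$, not your deterministic comparison scale. Moreover, Theorem~\ref{thm:deterministic} gives $R_{\ell+1}(q)\le d(W_{\ell+1},q)\le Bh_{\ell+1}$, not $C_{\rm ann}R_{\ell+1}(q)\le Bh_{\ell+1}$; the usable fact is $C_{\rm ann}Bh_{\ell+1}<1/2$ from the parameter choices. For $k$ so large that $s_k>Bh_{\ell+1}$, you need the paper's second case explicitly. There $T_\ell\le\cN_\ell(B(q,C_{\rm ann}Bh_{\ell+1}))$ has mean below $k/2$, so the upper Chernoff tail again gives $e^{-(\log 2-1/2)k}$.
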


\begin{lemma}
\label{lem:occupancy}
On $\cH_n$, for $\ell<L$, 
\begin{equation}
 T_\ell\le \cN_\ell(B(q,C_{\rm ann}R_{\ell+1}(q))),\qquad
 R_{\ell+1}(q)\le Bh_{\ell+1}
 ,\qquad C_{\rm ann}Bh_{\ell+1}<1/2.
 \label{eq:occupancy}
\end{equation}
\end{lemma}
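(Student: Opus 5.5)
Fix $\ell<L$ and work on $\cH_n$, so that the hierarchy lies in $\mathfrak C_n$ and $X_L\neq\varnothing$; hence every layer $X_k\supseteq X_L$ is nonempty and Proposition~\ref{prop:geometry} applies on every layer. We prove the three inequalities in \eqref{eq:occupancy} in the order: the bound on $R_{\ell+1}(q)$, the scale bound, and then the hop bound, which uses the first two.

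\textbf{Bound on $R_{\ell+1}(q)$.} The search on layer $\ell$ starts at the terminal vertex $W_{\ell+1}$ of layer $\ell+1$. Proposition~\ref{prop:geometry} gives $d(W_{\ell+1},q)\le Bh_{\ell+1}$. Since $W_{\ell+1}\in X_{\ell+1}$, this yields $R_{\ell+1}(q)\le d(W_{\ell+1},q)\le Bh_{\ell+1}$.

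\textbf{Scale bound.} Because $h_k$ increases in $k$, we have $h_{\ell+1}\le h_L$, and $Bh_L<1/4$ by assumption. It therefore suffices that $C_{\rm ann}<2$, which follows from $\omega<\pi/3$: then $2\sin(\omega/2)<2\sin(\pi/6)=1$. This alone is not enough, since $C_{\rm ann}=(1-2\sin(\omega/2))^{-1}$ could still be large. We use instead the parameter choice $\omega<\omega_\eps$, which gives $C_{\rm ann}<1+\eps$. Taking $\eps\le1$, or using the explicit choices of Appendix~\ref{app:dimension}, we get $C_{\rm ann}<2$ and hence $C_{\rm ann}Bh_{\ell+1}<2\cdot\tfrac14=\tfrac12$. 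If the appendix permits larger $\eps$, we would instead need a sharper bound on $Bh_L$, for example one obtained by taking $D$ large. This is the one place where the parameter bookkeeping must be checked.

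\textbf{Hop bound.} Each greedy hop on layer $\ell$ strictly decreases $d(\cdot,q)$, so the visited vertices are pairwise distinct. All of them, including the starting vertex $W_{\ell+1}$, satisfy $d(v,q)\le d(W_{\ell+1},q)$. We now bound $d(W_{\ell+1},q)$ by $C_{\rm ann}R_{\ell+1}(q)$, the second bound of Proposition~\ref{prop:geometry} on layer $\ell+1$. Hence every vertex visited on layer $\ell$ lies in $X_\ell\cap B(q,C_{\rm ann}R_{\ell+1}(q))$. A path of $T_\ell$ hops visits $T_\ell+1$ distinct vertices, which gives $T_\ell+1\le\cN_\ell(B(q,C_{\rm ann}R_{\ell+1}(q)))$ and in particular the claimed inequality.

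The condition $C_{\rm ann}Bh_{\ell+1}<1/2$ guarantees that this ball has radius below $1/2$. It is therefore a genuine Euclidean ball of volume $c_d(C_{\rm ann}R_{\ell+1})^d$, which is the form needed for the later counting estimates. The main obstacle is the constant check $C_{\rm ann}<2$ relative to the parameter regime; the remaining steps are direct consequences of Proposition~\ref{prop:geometry}.
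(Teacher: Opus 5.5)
Your hop bound and your bound $R_{\ell+1}(q)\le Bh_{\ell+1}$ follow the paper's argument. Counting the entry vertex $W_{\ell+1}\in X_{\ell+1}\subseteq X_\ell$ to get $T_\ell+1\le\cN_\ell(\cdot)$ is a harmless sharpening, and the monotonicity $h_{\ell+1}\le h_L$ is a step the paper leaves implicit.

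The gap is in the scale bound $C_{\rm ann}Bh_{\ell+1}<1/2$. Your route, $C_{\rm ann}<2$ combined with $Bh_L<1/4$, needs $\omega<2\arcsin(1/4)=\omega_1$, and this is guaranteed only for $\eps\le1$. Your claim that the explicit choices of Appendix~\ref{app:dimension} give $C_{\rm ann}<2$ is false:
\begin{itemize}
\item Theorem~\ref{thm:stochastic} fixes an arbitrary $\eps>0$.
\item The appendix only imposes $\omega\le\omega'<\omega_\eps$.
\item As $\eps\to\infty$, $\omega_\eps\uparrow\pi/3$, and $C_{\rm ann}\to1+\eps$ as $\omega\uparrow\omega_\eps$, so $C_{\rm ann}$ can be far above $2$.
\end{itemize}
Replacing $\eps$ by $\min\{\eps,1\}$ is a legitimate reduction for the theorem. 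It is not one for the lemma, which concerns parameters already fixed for the given $\eps$.

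The missing step is your own fallback ("take $D$ large"), and the appendix carries it out. There $\eps$ is built into $D$ through $Q>2(1+\eps)/\cos\omega'$ and $D\ge 2(Qr_{\rm cov})^d$. For large $n$, $n_L\ge D\log n$ gives $h_L\le D^{-1/d}<1/(Qr_{\rm cov})$. Using $B=r_{\rm cov}/\cos\omega$, $\cos\omega\ge\cos\omega'$ and $C_{\rm ann}<1+\eps$, you get
\[
C_{\rm ann}Bh_{\ell+1}\le C_{\rm ann}Bh_L<\frac{1+\eps}{Q\cos\omega'}<\frac12 .
\]
Obtain the bound directly in this form rather than splitting it as $C_{\rm ann}<2$ times $Bh_L<1/4$. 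With this replacement your proof is complete.
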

\begin{proof}
Proposition~\ref{prop:geometry} shows that the layer's entry $W_{\ell+1}$ is
within $C_{\rm ann}R_{\ell+1}(q)$ of $q$. 
During the search on layer \(\ell\), every greedy hop strictly decreases the distance to \(q\). Therefore every vertex visited by a greedy hop lies in the ball \(B(q,C_{\rm ann}R_{\ell+1}(q))\). Since a greedy walk never revisits a vertex, the number of hops is at most the number of layer-\(\ell\) points contained in this ball. This proves the first inequality in~\eqref{eq:occupancy}. 

By definition of $R_\ell$, we also have that $R_{\ell+1}(q)\le d(W_{\ell+1},q).$ Combining this with Theorem~\ref{thm:deterministic} yields 
\[ 
R_{\ell+1}(q)\le d(W_{\ell+1},q)\le Bh_{\ell+1}. 
\] 
Finally, Appendix~\ref{app:dimension} shows that \( C_{\rm ann}Bh_{\ell+1}<\frac12.\)
\end{proof}

\begin{proof}[Proof of Proposition~\ref{prop:layer-cost}]
Set $\eta:=\frac{a\rho}{2bC_{\rm ann}^d}$ and $\xi:=\log2-\tfrac12$ and $\gamma:=\min\{\eta,\xi\}$.
For a given $k\ge1$, set
\[
 s_k=\Big(\frac{k}{2bc_dC_{\rm ann}^dn_\ell}\Big)^{1/d}.
\]
The radius \(s_k\) is chosen so that the expected number of points in a ball of radius \(C_{\mathrm{ann}}s_k\) is at most \(k/2\) in layer~\(\ell\).
We distinguish whether \(s_k\le Bh_{\ell+1}\). In the first case, both comparison balls are embedded and otherwise we use the deterministic radius \(Bh_{\ell+1}\) to bound the number of visited vertices.

If $s_k\le Bh_{\ell+1}$, then \eqref{eq:occupancy} gives 
\[
 \{T_\ell\ge k,\cH_n\}
 \subseteq\{R_{\ell+1}(q)>s_k\}
 \cup\{\cN_\ell(B(q,C_{\rm ann} s_k))\ge k\}.
\]
The first event is the event that $B(q,s_k)$ contains no point of $X_{\ell+1}$, so
its probability is at most
$$\exp(-an_{\ell+1}c_ds_k^d)=e^{-\eta k}.$$ The point count of the second event has mean at
most $$bn_\ell c_d(C_{\rm ann}s_k)^d=k/2,$$ so \eqref{eq:count-bounds} shows that the probability of the event is bounded by
$e^{-\xi k}$.

If $s_k>Bh_{\ell+1}$, the ball of radius $C_{\rm ann}s_k$ might wrap around the torus.  By Lemma~\ref{lem:occupancy}, every vertex visited during the search on layer \(\ell\) is in \(B(q,C_{\rm ann}Bh_{\ell+1})\). Since the number of hops is bounded by the number of layer-\(\ell\) points in this ball, \( T_\ell \le \cN_\ell(B(q,C_{\rm ann}Bh_{\ell+1})). \) Thus, 
\[ 
\{T_\ell\ge k,\cH_n\} \subseteq \{\cN_\ell(B(q,C_{\rm ann}Bh_{\ell+1}))\ge k\}. 
\] 
Because \(s_k>Bh_{\ell+1}\), 
\[ 
bn_\ell c_d(C_{\rm ann}Bh_{\ell +1})^d < bn_\ell c_d(C_{\rm ann}s_k)^d = k/2. 
\] 
Thus, the expected number of points in \(B(q,C_{\rm ann}Bh_{\ell+1})\) is strictly smaller than \(k/2\). Lemma~\ref{lem:counts} gives 
\[ 
\PP(T_\ell\ge k,\cH_n) \le \PP\!\left(\cN_\ell(B(q,C_{\rm ann}Bh_{\ell+1}))\ge k\right) \le e^{-\xi k}.
\] 

Summing the resulting tail over $k$ proves
\eqref{eq:layer-mean}. The Cox statement follows from uniform conditional
Poisson bounds.
\end{proof}

\subsection{Storage}
The hierarchy contains \(N_{\rm tot}\) layer incidences, each storing at most \(K\) outgoing arcs. Consequently, \(A_{\rm tot}\le KN_{\rm tot}\), so bounding \(N_{\rm tot}\) also controls graph storage. The next proposition shows that \(N_{\rm tot}\) is linear in \(n\) both in expectation and with exponentially high probability.
\begin{proposition}[Linear layer incidences]
\label{prop:storage}
For every admissible input there are fixed $C_0,c_0>0$ with
\[
 \PP(N_{\rm tot}>C_0n)\le e^{-c_0n},\qquad
 \E N_{\rm tot}\le\frac{bn}{1-\rho},\qquad
 \E N_{\rm tot}^2=O(n^2).
\]
\end{proposition}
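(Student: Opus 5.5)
The plan is to write $N_{\rm tot}=\sum_{x\in X_0}(\min\{H_x,L\}+1)$. This identity holds because a point $x$ is counted once in each layer $\ell\le\min\{H_x,L\}$. Conditional on $X_0$, the summands are independent and each is dominated by $1+G_x$, where $G_x$ is geometric with $\PP(G_x\ge\ell)=\rho^\ell$. All three claims then reduce to controlling $|X_0|$ and a sum of independent geometric variables.

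For the expectation, apply \eqref{eq:campbell} with $\ell=0$ (or simply linearity together with the independence of heights from $X_0$). This gives
\[
\E N_{\rm tot}=\sum_{\ell=0}^L\E|X_\ell|=\sum_{\ell=0}^L\int_{\T^d}\lambda_\ell(v)\,dv\le\sum_{\ell\ge0}bn\rho^\ell=\frac{bn}{1-\rho},
\]
using $\lambda_\ell\le bn_\ell$ from the Palm criterion. For Cox input, the same bound holds conditionally on $F_n$ and survives averaging.

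For the tail bound, split on $|X_0|$. First, Lemma~\ref{lem:counts} applied with $S=\T^d$ and \eqref{eq:palmcrit1} at $\ell=0$ gives $\PP(|X_0|\ge 2e\,bn)\le e^{-2ebn\log 2+bn}\le e^{-bn}$; more simply, use $t=1$ in the MGF bound. Second, on $\{|X_0|\le C_1n\}$, the sum of at most $C_1n$ independent geometric variables has MGF $\E e^{sG}=(1-\rho)/(1-\rho e^s)$, which is finite for $s<\log(1/\rho)$. A Chernoff bound at a fixed $s$ then yields $\PP(\sum_x G_x>C_2n)\le e^{-c n}$ for $C_2$ large. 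To make this rigorous, I condition on $X_0$ and use that the heights are independent of $X_0$. Taking $C_0=C_1+C_2$ and $c_0$ the smaller of the two rates gives the first claim. For Cox input, the MGF bound on $|X_0|$ holds uniformly conditionally on $F_n$.

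For the second moment, I expect the main (mild) obstacle to be avoiding explicit second-order correlation computations for DPPs. I would bypass these by combining the tail bound with a crude deterministic-in-law bound. Since $N_{\rm tot}\le (L+1)|X_0|$ and $L=O(\log n)$, we have
\[
\E N_{\rm tot}^2\le C_0^2n^2+\E\bigl[N_{\rm tot}^2\1\{N_{\rm tot}>C_0n\}\bigr]\le C_0^2n^2+(L+1)^2\bigl(\E|X_0|^4\bigr)^{1/2}e^{-c_0n/2}
\]
by Cauchy--Schwarz. The moment $\E|X_0|^4$ is $O(n^4)$, because the MGF domination \eqref{eq:palmcrit1} bounds all moments of $|X_0|$ by those of a Poisson$(bn)$ variable. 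The error term is therefore $o(1)$, and $\E N_{\rm tot}^2=O(n^2)$. Alternatively, the conditional decomposition $\E[N_{\rm tot}^2\mid X_0]\le C|X_0|^2$ together with $\E|X_0|^2=O(n^2)$ from the same MGF domination gives the bound directly; this is the cleaner route and the one I would write.
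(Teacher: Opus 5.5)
Your proposal is correct and follows essentially the paper's argument. You dominate $N_{\rm tot}$ by $\widetilde N=\sum_{x\in X_0}(H_x+1)$ and get the mean from the independence of the heights together with $\E|X_0|\le bn$. You then get the second moment from $\E[\widetilde N^2\mid X_0]\le C|X_0|^2$ and the bound on $\E|X_0|^2$ supplied by the Poisson MGF domination; the paper extracts this as $\E[|X_0|(|X_0|-1)]\le(\E|X_0|)^2$ by comparing second-order expansions at $t=0$.

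The only variation is in the tail bound. You split on $\{|X_0|\le C_1n\}$ and use two Chernoff bounds with a union bound; the resulting factor $2$ is absorbed by enlarging $C_1,C_2$. The paper instead applies a single Chernoff bound to the compound sum via $\E e^{t\widetilde N}=\E[g(t)^{|X_0|}]\le\exp\{bn(g(t)-1)\}$. Both work.
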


\begin{proof}[Proof of Proposition~\ref{prop:storage}] Let \(Y_x=H_x+1\) and observe that $N_{\textrm{tot}}$ is bounded by \begin{equation}\label{eq:ntotbound}
N_{\rm tot}\le \widetilde N:=\sum_{x\in X_0}Y_x.   
\end{equation}
For \(0<t<\log(1/\rho)\), the moment generating function of \(Y_x\) is 
\[ 
g(t):=\E e^{tY_x} =\frac{(1-\rho)e^t}{1-\rho e^t}. 
\] 
Conditioning on \(X_0\) gives \( \E[e^{t\widetilde N}\mid X_0] = g(t)^{|X_0|}. \) Applying Equation~\eqref{eq:palmcrit1} to the whole torus gives 
\[ 
\E e^{t\widetilde N} = \E[g(t)^{|X_0|}] \le \exp\{bn(g(t)-1)\}. 
\] 
For Cox input, the same estimate holds conditionally on the environment, since the total directing mass is at most \(b\), and therefore also after averaging. Choosing \(C_0>b(g(t)-1)/t\) and applying Chernoff's inequality yields 
\[ 
\PP(\widetilde N>C_0n)\le e^{-c_0n} 
\] 
for some \(c_0>0\) and hence the same bound for \(N_{\rm tot}\). The first two moments of $Y_x$ are \(\E Y_x=(1-\rho)^{-1}\) and \(\E Y_x^2=(1+\rho)/(1-\rho)^2\). Using independence of the heights and~\eqref{eq:ntotbound}, we obtain
\[ 
\E \widetilde N = \E|X_0|\,\E Y_x \le \frac{bn}{1-\rho}, 
\] 
while 
\[ 
\E[\widetilde N^2\mid X_0] = |X_0|\E Y_x^2 + |X_0|(|X_0|-1)(\E Y_x)^2. 
\] 
The ordinary MGF bound also gives $$\mathbb E[|X_0|(|X_0|-1)]\le(\mathbb E|X_0|)^2\le b^2n^2,$$ by comparing the second-order expansions at zero. Taking expectations then yields $$\E\widetilde N^2=O(n^2). $$

For Cox input, the same calculation is performed conditionally on the environment and then averaged. Since \(N_{\rm tot}\le\widetilde N\), these estimates transfer directly to \(N_{\rm tot}\). Finally, each layer incidence contributes at most \(K\) outgoing arcs, so \( A_{\rm tot}\le K N_{\rm tot}, \) which gives the stated bounds. 
\end{proof}

\begin{proof}[Proof of Theorem~\ref{thm:complexity}]
Fix a deterministic query $q$. We decompose the hop count into three contributions: the top layer, the remaining layers, and the hop count on the  event $\cH_n^c$. As every point in each layer is visited at most once, $T_L\le|X_L|$ and $T_{\rm tot}\le N_{\rm tot}$. Hence,
\begin{align*}
 \E T_{\rm tot}
 &\le\E|X_L|+\sum_{\ell<L}\E[T_\ell\1_{\cH_n}]
       +\E[N_{\rm tot}\1_{\cH_n^c}]\\
 &\le(bD/\rho)\log n+LO(C_{\textrm{ann}}^d)
       +(\E N_{\rm tot}^2)^{1/2}\PP(\cH_n^c)^{1/2}
\end{align*}
Here, the first term is analyzed by~\eqref{eq:hierarchy}, the second term comes from Proposition~\ref{prop:layer-cost} and~\eqref{eq:layer-mean}, and the third term uses Cauchy-Schwarz.
Using $L=O(\log n)$ together with Propositions~\ref{prop:coverage} and~\ref{prop:storage} yields
\begin{align*}
 \E T_{\rm tot}
 &=O((D+C_{\textrm{ann}}^d)\log n)+O(n^{1-M/2})=O((D+C_{\textrm{ann}}^d)\log n)
\end{align*}
as $M>2$. Since \(C_{\textrm{ann}}\le1+\varepsilon\), we get $C_{\textrm{ann}}^d = O((1+\varepsilon)^d)$ and by Appendix~\ref{app:dimension}, we obtain $$D= O(\exp(\tfrac{1}{2}d\log(d)+O(d))),$$ so we can conclude with
\begin{align*}
 \E T_{\rm tot}
 &=O(\exp(\tfrac{1}{2}d\log(d)+O(d))\log n).
\end{align*}
Independent random queries are handled by conditioning on the query.
\end{proof}

\appendix
\section{Dimension-dependent parameter construction}
\label{app:dimension}
We verify the parameter choices required by Theorems~\ref{thm:stochastic} and~\ref{thm:complexity}  when $d=o(\log n/\log\log n).$ 

\begin{proof}[Verification of the parameter choices] 
For \(d=1\), the two directions \(\{\pm1\}\) define the required sector partition and all parameters are fixed constants. Assume \(d\ge2\). Fix \(0<\omega' < \omega_\varepsilon<\pi/3. \) Let \(\{u_1,\ldots,u_{K}\}\subset \mathbb S^{d-1}\) be a maximal \((\omega'/2)\)-separated set. The associated spherical Voronoi cells have maximum angular
diameter $\omega\le\omega'$, and each contains the spherical cap of angular radius \(\omega'/4\) around its center. Let \(v_d(r)\) denote the normalized surface measure of a spherical cap of radius \(r\). For fixed \(r\in(0,\pi/2)\), 
\[ 
v_d(r) \ge \frac{r}{2\pi}\sin^{d-2}(r/2) = \exp(-O_r(d)). 
\] 
Consequently, 
\[ 
K \le v_d(\omega'/4)^{-1} = \exp(O(d)), 
\] 
while every sector has measure at least 
\[ 
p_{*} \ge v_d(\omega'/4) = \exp(-O(d)). 
\] 
The proof of Proposition~\ref{prop:coverage} requires three local count events: the outer ball must contain at least \(m\) points, the inner ball must contain at most \(m\) points, and every inner sector must contain a point. We choose
\begin{align*}
    \zeta = M+5, \qquad C_{{\rm pool}} = \max\left\{1,\frac{\zeta}{\log 2 -1/4},\frac{4b\zeta}{ap_*}\right\}, \alpha = \max\{ 8C_{{\rm pool}},16\zeta \}
\end{align*}
and next
\[ 
r_{\rm cov}^d=\frac{\alpha}{ac_d}, \qquad B=\frac{r_{\rm cov}}{\cos\omega}, \qquad r_{*}^d=\frac{C_{{\rm pool}}}{4bc_d}.
\] 
These choices guarantee that the inequalities in equations~\eqref{eq:chernoff_prerequirement_1}, \eqref{eq:chernoff_prerequirement_2} and~\eqref{eq:estimate where Cpool is chosen} hold. 
The query-cost analysis uses comparison balls centered at the query. To apply the count estimates, these balls must remain embedded in the torus, so that their volume is exactly \(c_dr^d\). Choose a fixed constant 
\[ 
Q>\max\Big\{2,\frac{4}{\cos\omega'},\frac{2(1+\eps)}{\cos\omega'}\Big\}
\] 
and define 
\[ 
D = 1+a^{-1}(M+3)+2(Qr_{\rm cov})^d. 
\] 
As shown below, $D\log n = o(n)$. Hence, for sufficiently large $n$ the definition of $L$ in~\eqref{eq:hierarchy} gives $h_L \le D^{-1/d}$ and since $r_*<r_{\rm cov}$, the choice of $D$ ensures
\begin{align*}
    r_*h_L < r_{\rm cov}h_L < \frac{1}{2}, \qquad Bh_L<\frac{1}{4}, \qquad C_{\rm ann}Bh_{L}<\frac{1}{2}.
\end{align*}
To estimate the resulting dimension dependence, note that 
\[ 
c_d=\frac{\pi^{d/2}}{\Gamma(1+d/2)}. 
\] 
By Stirling's formula, 
\[ 
\log(c_d^{-1}) = \tfrac12 d\log d+O(d). 
\] 
Since 
\begin{align*}
r_{\rm cov}=(\alpha/a)^{1/d}c_d^{-1/d}\quad \text{ and } \quad \alpha=\exp(O(d)),
\end{align*}
we get
\[ 
r_{\rm cov}=O(\sqrt d), \qquad B=O(\sqrt d), \qquad r_{*}=O(\sqrt d). 
\] 
Furthermore, 
\[ 
D = \exp\Big( \tfrac12 d\log d+O(d) \Big). 
\] 
Finally, $ d\log d=o(\log n) $ implies that $K=n^{o(1)}$, $C_{{\rm pool}}=n^{o(1)}$, $D=n^{o(1)}$. In particular, we have $ D\log n=o(n)$, so for sufficiently large $n$, the definition of $L$ in~\eqref{eq:hierarchy} gives
\[
D\log n\le n_L<(D/\rho)\log n.
\]
Finally, Lemma~\ref{lem:counts} yields
\begin{align*}
\PP(X_L=\varnothing)\le e^{-an_L}\le n^{-aD}=O(n^{-M}).
\end{align*}
\end{proof}

\section*{Declaration of AI assistance}

LLMs were used to check the language and grammar, to check the proofs of the main results, and to generate both the illustrative visualizations and numerical experiments. The authors take full responsibility for the paper. In particular, the model formulation, results and proofs were all written by the authors before verification and cleanup by LLMs. 

\section*{Acknowledgements}

S. Bhamidi was partially supported by the National Science Foundation through grants DMS-2113662, DMS-2413928, and DMS-2434559, and by the NSF Research Training Group grant DMS-2134107.

\bibliographystyle{plain}
\bibliography{references}

\end{document}